\newcommand{\E}[1]{\textup{E}\big( #1 \big)}
\newcommand{\bH}{\mb{H}}
\newcommand{\bHhat}{\widehat{\mb{H}}}
\newcommand{\res}[2]{\big[ #1 \big]_{#2}}
\newcommand{\T}{\intercal}
\newcommand{\supp}{\textup{S}_i}
\newcommand{\budget}{\textup{b}}
\newcommand{\prob}{\textup{Pr}}
\newcommand{\real}{\mathbb{R}}
\newcommand{\graph}{\mathcal{G}}
\newcommand{\nodes}{\mathcal{V}}
\newcommand{\edges}{\mathcal{E}}
\newcommand{\actions}{\mathcal{A}}
\newcommand{\game}{\mathscr{G}}
\newcommand{\nash}{\textup{NE}}
\newcommand{\sign}{\textup{sign}}
\newcommand{\mb}[1]{\mathbf{#1}}
\newcommand{\minus}{\text{\normalfont -}}
\newcommand{\seq}[1]{\{ 1,\dots, #1 \}}
\newcommand{\inner}[2]{\langle #1 \,, #2 \rangle}
\newcommand{\bbeta}{\bm{\beta}}
\newcommand{\bpsi}{\bm{\psi}}
\newcommand{\bgamma}{\bm{\gamma}}
\newcommand{\bz}{\mb{z}}
\newtheorem{theorem}{Theorem} 
\newtheorem{lemma}{Lemma}
\newtheorem{assumption}{Assumption}
\title{Provable Sample Complexity Guarantees for Learning of Continuous-Action Graphical Games with Nonparametric Utilities}
\date{}
\author{%
	Adarsh Barik \\
	Department of Computer Science\\
	Purdue University\\
	West Lafayette, Indiana, USA\\
	\texttt{abarik@purdue.edu} \\
	\and
	Jean Honorio \\
	Department of Computer Science \\
	Purdue University \\
	West Lafayette, Indiana, USA\\
	\texttt{jhonorio@purdue.edu} \\
}
\begin{document}
\maketitle

\begin{abstract}
In this paper, we study the problem of learning the exact structure of continuous-action games with non-parametric utility functions. We propose an $\ell_1$ regularized method which encourages sparsity of the coefficients of the Fourier transform of the recovered utilities. Our method works by accessing very few Nash equilibria and their noisy utilities. Under certain technical conditions, our method also recovers the exact structure of these utility functions, and thus, the exact structure of the game. Furthermore, our method only needs a logarithmic number of samples in terms of the number of players and runs in polynomial time. We follow the primal-dual witness framework to provide provable theoretical guarantees.
\end{abstract}

\section{Introduction}
\label{sec:introduction}
Game theory has been extensively used as a framework to model and study the strategic interactions amongst rational but selfish individual players who are trying to maximize their payoffs. Game theory has been applied in many fields including but not limited to social and political science, economics, communication, system design and computer science. In non-cooperative games each player decides its action based on the actions of others players. These games are characterized by the equilibrium solution concept such as \emph{Nash equilibrium (NE)}~\cite{Nash51} which serves a descriptive role of the stable outcome of the overall behavior of self-interested players (e.g., people, companies, governments, groups or autonomous systems) interacting strategically with each other in distributed settings. 

Graphical games, introduced within the AI community about two decades ago, \emph{graphical games}~\cite{Kearns01}, are a representation of multiplayer games which capture and
exploit locality or sparsity of direct influences. They are most appropriate for large-scale
population games in which the payoffs of each player are determined by the actions
of only a small number of other players. Indeed, graphical games played a prominent role in establishing the computational complexity of computing NE in normal-form games as well as in succinctly representable multiplayer games (see, e.g.,~\cite{Daskalakis06,Daskalakis09,Daskalakis09b} and the references therein). Graphical games have been studied for both discrete and continuous actions.

\paragraph{Inference in Graphical Games.}
There has been a large body of work on computing classical equilibrium solution concepts. 
The Nash equilibria and \emph{correlated equilibria}~\cite{Aumann74} in graphical games have been studied by ~\cite{Blum06,Kearns01,Kakade03,Ortiz02,Papadimitriou08,Vickrey02,facchinei2007generalized,scutari2010convex,perkins2013asynchronous,perkins2015mixed,mertikopoulos2019learning}. The computation of the \emph{price of anarchy} was studied in \cite{Benzwi11}. In addition, \cite{Irfan14} identified the most influential players, i.e., a small set of players whose collective behavior forces every other player to a unique choice of action. All the works above focus on inference problems for graphical games, and fall in the field of algorithmic game theory. All of these methods assume access to graphical game network and payoffs of the games under consideration.

\paragraph{Learning Graphical Games.}
Consider an example of the following decision problem in a marketplace where there are $n$ producers selling $k$ products. Each producer decides the production quantity for a product based on its supply and demand in the market which could be treated as the action vector for the producer. Each producer gauges the supply and demand by looking at the production quantity of few other players which it considers its \emph{close competitors}. The payoff for a producer can be measured by its revenue. In order to answer the \emph{inference} problems discussed above, we would need to recover the structure of this graphical game. In particular we ask: Given that we have access to a  noisy observation of the player's revenue and few joint action vectors at equilibria, can we learn the neighbors of a player? Of course, once we learn the local structure for each individual player then we can combine them in order to obtain the entire structure of the graphical game. Learning the structure of a game is essential to the development, potential use and success of game-theoretic models in large-scale practical applications. In this paper, we study the problem of learning the graph structure in a continuous-action graphical game with non-parametric utility functions.

\paragraph{Related Work.}
None of the prior literature has dealt with either continuous actions or non-parametric utilities. In discrete-action games, \cite{Honorio15} proposed a maximum-likelihood approach to learn “linear influence games” - a class of parametric graphical games with binary actions and linear payoffs.  However, their method runs in exponential time and the authors assumed a specific observation model for the strategy profiles. For the same specific observation model, \cite{Ghoshal16} proposed a polynomial time algorithm, based on $\ell_1$-regularized logistic regression, for learning linear influence games. Their strategy profiles (or joint actions) were drawn from a mixture of uniform distributions: one over the pure-strategy Nash equilibria (PSNE) set, and the other over its complement. \cite{Ghoshal17} obtained necessary and sufficient conditions for learning linear influence games under arbitrary observation model. \cite{Garg16} use a discriminative, max-margin based approach, to learn tree structured polymatrix games\footnote{Polymatrix games are graphical games where each player's utility is a sum of unary (single player) and pairwise (two players) potential functions.}. Their method runs in exponential time and the authors show that learning polymatrix games is NP-hard under this max-margin setting, even when the class of graphs is restricted to trees. Finally, \cite{Ghoshal18} proposed a polynomial time algorithm for learning sparse polymatrix games in the discrete-action setting.

\paragraph{Contribution.}
We propose a novel method to learn graphical games with non-parametric payoffs. To this end, we propose an $\ell_1$-norm regularized method to learn graphical games, which encourages sparsity of the coefficients of the Fourier transform of the recovered utilities. For $n$ players, at most $d$ in-neighbors per player, we show that $N = \mathcal{O}(d^3 \log (n))$  \emph{samples} are sufficient to recover the exact structure of the true game. We also fully characterize a utility function which is at most $\epsilon$ away from the true utility function. We achieve the following goals in this paper: 

\begin{enumerate}
	\item {\bf Correctness -} Our method correctly recovers the true structure of the graphical games.
	\item {\bf Computational efficiency -} Our method has a polynomial time complexity with respect to the number of players and thus, can handle high dimensional cases.
	\item {\bf Sample complexity -} Our method achieves only a logarithmic sample complexity with respect to the number of players.
\end{enumerate}

\section{Preliminaries}
\label{sec:preliminaries}
In this section, we introduce our notations and formally setup our problem.

\subsection{Notation}
\label{subsec:notation}
Consider a directed graph $\graph(\nodes,\edges)$, where $\nodes$ and $\edges$ are set of vertices and edges respectively. We define $\nodes \triangleq \seq{n}$, where each vertex corresponds to one player. We denote the in-neighbors of a player $i$ by $\supp$, i.e., $\supp = \{ j \mid (j, i) \in \edges \}$ (i.e., the set of nodes that point to node $i$ in the graph). All the other players are denoted by $\supp^c$, i.e., $\supp^c = \seq{n} \backslash (\supp \cup \{i\}) $. Let $|\supp| \leq d$ and $|\supp^c| \geq n-d$. 

For each player $i \in \nodes$, there is a set of actions or \emph{pure-strategies}  $\actions_i$. That is, player $i$ can take action $x_i \in \actions_i$. Each action $x_i$ consists of making $k$ decisions on a limited budget $\budget \in \real$. We consider games with continuous actions. Mathematically, $x_i \in \actions_i = \real^k$ and $\| x_i \|_2 \leq \budget$. We use $x_{\minus i}$ to denote the collection of actions of all players but $i$. For each player $i$, there is also a local payoff function $u_i : \actions_i \times ( \bigtimes_{j \in \supp} \actions_j ) \to \real$ mapping the joint action of player $i$ and its in-neighbors $\supp$, to a real number. A joint action $\mb{x}^* \in \bigtimes_{i \in \nodes} \actions_i$ is a \emph{pure-strategy Nash equilibrium (PSNE)} of a graphical game iff, no player $i$ has any incentive to unilaterally deviate from the prescribed action $x_i^* \in \actions_i$, given the joint action of its in-neighbors $x_{\supp}^* \in \bigtimes_{j \in \supp} \actions_j$ in the equilibrium. We denote a game by $\game$, and the set of all PSNE is denoted by  $\nash(\game)$.

For a matrix $\mb{A} \in \real^{p \times q}$ and two sets $S \subseteq \seq{p}$ and $T \subseteq \seq{q}$,  $\mb{A}_{S T}$ denotes $\mb{A}$ restricted to rows in $S$ and columns in $T$. Similarly, $\mb{A}_{S.}$ and $\mb{A}_{.T}$ are row and column restricted matrices respectively. For a vector $\textbf{m} \in \real^q$, the $\ell_{\infty}$-norm is defined as $ \| \textbf{m} \|_{\infty} = \max_{i \in \seq{p}} | \textbf{m}_i | $. 
The $\ell_{\infty}$-operator norm for $\mb{A}$ is defined as $\| \mb{A} \|_{\infty, \infty} = \max_{i \in \seq{p}}\sum_{j=1}^q | \mb{A}_{ij} |$.
The spectral norm of $\mb{A}$ is defined as $\| \mb{A} \|_{2} = \sup_{ \|\mb{x}\|_2 = 1} \|\mb{A} \mb{x}\|_2$.

\subsection{Model}
\label{subsec:model}
Now we describe the basic setup of our problem. We consider a graphical game with the property that utility function of each player $i$ is decomposable into sum of pairwise functions which only depend on the in-neighbors of $i$. In particular,
\begin{align}
\label{eq:pairwise utility}
u_i(x) = \sum_{j \in \supp} u_{ij}(x_i, x_j)
\end{align}
where $\supp$ is the set of neighbors of player $i$. 

For two real valued functions $f(x)$ and $g(x)$ on the domain $x \in \mathcal{X}$, we define their inner product as
\begin{align}
\begin{split}
	\inner{f}{g} = \int_{x \in \mathcal{X}} f(x) g(x) dx \ .
\end{split}
\end{align}  
We denote $L_2$ norm of $f(x)$ as
\begin{align}
\begin{split}
	\| f \|_2 = \sqrt{\int_{x \in \mathcal{X}} f^2(x) dx} \ .
\end{split}
\end{align} 
Using the above definitions, it is useful to express our model in the basis functions. Let $\psi_k(.,.), \forall k=\{1,\dots,\infty\}$ be a set of uniformly bounded, orthonormal basis functions such that 
\begin{align}
u_{ij}(x_i, x_j) = \sum_{k=0}^{\infty} \beta_{ijk}^* \psi_{k}(x_i, x_j)
\end{align}
where $\beta_{ijk}^* = \inner{u_{ij}}{\psi_k}$. We assume that for all $i, j \in \seq{n}, i \ne j$, the weight magnitudes $|\beta_{ijk}^*|, \forall k \in \{r+1,\dots,\infty\}$ form a convergent series for a sufficiently large $r$. For example, for large enough $r$, Fourier coefficients of a periodic and twice continuously differentiable function form a convergent series \cite{rust2013convergence}.  For all $i,j \in \{1,\dots,n\}$ we assume that $\sup_{x_i \in \actions_i, x_j \in \actions_j} |\psi_k(x_i, x_j)| \leq \overline{\psi}$. We assume that we have access to a set of joint actions  $\mb{x}$ and the corresponding noisy payoff for each player. 

We define a local setup where we learn the in-neighbors of a player by having access to its perturbed payoff and perturbed basis functions. This allows us to apply our method in a distributed setting where in-neighbors for each node are recovered independently and combined later to get the complete structure of the graphical game.  

\subsection{Sampling Mechanism}
Treating the outcomes of the game as ``samples'' observed across multiple ``plays'' of the same game is a recurring theme in the literature for learning games \cite{Honorio15,Ghoshal17,Ghoshal18}. All of these works assume access to Nash equilibria and a noise mechanism. Noise could be added to each player's strategy at a local level or by mixing the sample with a non-Nash equilibria set at a global level. Since none of the prior literature has dealt with either continuous actions or non-parametric utilities, we propose a novel sampling mechanism. We assume that we have noisy black-box access to payoffs. For a joint action $x$ the blackbox outputs a noisy payoff $\widetilde{u}_i(x)$. The blackbox computes this noisy payoff by first computing noisy basis function values $\widetilde{\psi}_k(x_i, x_j)$ and then finally taking their weighted sum to get the final noisy payoff.

Mathematically,   
\begin{align}
\widetilde{\psi}_k(x_i, x_j) = \psi_k(x_i, x_j) + \gamma_k(x_i, x_j)
\end{align}  
where $\gamma_k(x_i, x_j)$ are independent zero mean sub-Gaussian noise with variance proxy $\sigma^2$. The class of sub-Gaussian random variables includes for instance Gaussian random variables, any bounded random variable (e.g. Bernoulli, multinomial, uniform), any random variable with strictly log-concave density, and any finite mixture of sub-Gaussian variables. Thus, using sub-Gaussian noise makes our setup quite general in nature. Correspondingly, the resulting noisy payoff function that we observe can be written as follows:
\begin{align}
\widetilde{u}_i(x) = \sum_{j \in \supp} \widetilde{u}_{ij}(x_i, x_j) =  \sum_{j \in \supp} \sum_{k=0}^\infty \beta_{ijk}^* \widetilde{\psi}_k(x_i, x_j)
\end{align}
where 
\begin{align}
\widetilde{u}_{ij}(x_i, x_j) = \sum_{k=0}^\infty \beta_{ijk}^* \widetilde{\psi}_k(x_i, x_j) \ .
\end{align}

\paragraph{Estimation.}
Let $\overline{u}_{ij}(x_i, x_j)$ be our estimation of $u_{ij}(x_i, x_j)$. Ideally, we would like to estimate an infinitely long vector with entries indexed by $(j,k)$ for $j \in \seq{n}, j \ne i$ and $k \in \seq{\infty}$ for every player $i$. However, this is impractical with finite computing resources. Rather, we estimate an $(n -1) \times r$ dimensional vector $\bbeta$ with entries indexed by $(j,k)$ for $j \in \seq{n}, j \ne i$ and $k \in \seq{r}$ for every player $i$. Using these finite number of coefficients, we estimate pairwise utility in the following manner,
\begin{align}
\overline{u}_{ij}(x_i, x_j) = \sum_{k=0}^r \beta_{ijk} \psi_k(x_i, x_j)
\end{align} 
and thus, the resulting estimation of the utility function is:
\begin{align}
\overline{u}_i(x) = \sum_{j \in \supp} \sum_{k=0}^r \beta_{ijk} \psi(x_i, x_j)
\end{align}

\section{Theoretical Results}
\label{sec:theoretical results}
In this section, we setup our estimation problem and prove our theoretical results. We also mention some technical assumptions in this section which we will use in our proofs. Let $D$ be a collection of $N$ independent samples. We estimate $\bbeta^*$ by solving the following optimization problem in sample setting for some $\lambda > 0$:
\begin{align}
\label{eq:estimateprob}
\begin{split}
&\min_{\beta} \frac{1}{N} \sum_{x \in D} (\widetilde{u}_i(x) - \overline{u}_i(x))^2 + \lambda \sum_{j \ne i} \sum_{k=0}^r |\beta_{ijk}| \\
&\text{such that } \sum_{j \ne i} \sum_{k=0}^r |\beta_{ijk}|  \leq C
\end{split}
\end{align}

where $C > 0$ is a constant that acts as a budget on the coefficients $\beta_{ijk}$ to ensure that they are not unbounded. We would like to prove that $\beta_{ijk} = 0, \forall j \notin \supp$ and $\beta_{ijk} \ne 0, \forall j \in \supp$. This gives us a straight forward way of picking in-neighbors of player $i$ by solving optimization problem \eqref{eq:estimateprob}. We use an auxiliary variable $w = \sum_{j \ne i}  \sum_{k=0}^r |\beta_{ijk}|$ and prove the following lemma to get Karush-Kuhn-Tucker (KKT) conditions at the optimum.

\begin{lemma}[KKT conditions]
	\label{lem:kkt}
	The following Karush-Kuhn-Tucker (KKT) conditions hold at the optimal solution of the optimization problem \eqref{eq:estimateprob}:
	\begin{align}
	\begin{split}
	\text{Stationarity: } &\frac{\partial}{\partial \bbeta} \big[ \frac{1}{N} \sum_{x \in D} (\widetilde{u}(x) - \overline{u}(x))^2 \big] + \lambda \bz = 0 \\
	\text{Primal Feasibility: } & w \leq C \\
	& w =\sum_{j \ne i}  \sum_{k=0}^r |\beta_{ijk}|  \\
	\text{Dual Feasibility: } & \lambda \geq 0
	\end{split}
	\end{align} 
	where $\bz \in \real^{r(n-1) \times 1} $indexed by $(j,k), j \in \seq{n}, j \ne i, k \in \seq{r}$. It is defined as follows:
	\begin{align}
	z_{ijk} = \begin{cases}
	\sign(\beta_{ijk}), \text{ if } \beta_{ijk} \ne 0 \\
	[-1, 1],  \text{ otherwise }
	\end{cases}
	\end{align}	
\end{lemma}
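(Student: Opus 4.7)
The plan is to apply the KKT framework for convex optimization with non-smooth objectives. First, I would recast problem \eqref{eq:estimateprob} in standard form by introducing the auxiliary variable $w$ together with the epigraph-style constraint $\sum_{j \ne i}\sum_{k=0}^r |\beta_{ijk}| \le w$ and the budget $w \le C$. The resulting program is jointly convex in $(\bbeta, w)$ since the squared residual loss and both constraint functions are convex; moreover, at any optimum the epigraph constraint is tight, because otherwise one could decrease $w$ and lower the $\lambda w$ contribution to the objective without violating either constraint. This recovers the primal feasibility statement $w = \sum_{j \ne i}\sum_{k=0}^r |\beta_{ijk}|$ listed in the lemma.

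Next, I would form the Lagrangian with non-negative multipliers attached to both inequality constraints and write first-order optimality in $\bbeta$ and in $w$. Stationarity in the scalar $w$ couples these multipliers with the regularization constant, which lets one identify the effective coefficient in front of the subgradient of $\|\bbeta\|_1$ with the parameter named $\lambda$ in the lemma (absorbing the budget multiplier into $\lambda$ when the budget constraint is active). Stationarity in $\bbeta$ then yields $\nabla_{\bbeta} \bigl[\tfrac{1}{N}\sum_{x\in D}(\widetilde u_i(x)-\overline u_i(x))^2\bigr] + \lambda\,\bz = 0$ after invoking subdifferential calculus on $\|\bbeta\|_1$: the subgradient decomposes coordinate-wise, taking the value $\sign(\beta_{ijk})$ at nonzero coordinates and the interval $[-1,1]$ at zero coordinates, which matches exactly the set-valued definition of $\bz$ in the statement. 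Primal feasibility is then a restatement of the constraints, and dual feasibility $\lambda \ge 0$ is inherited from the non-negativity of Lagrange multipliers.

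The main technical point to handle carefully is the subdifferential calculus at zero coordinates of $\bbeta$: the stationarity condition must be read as an inclusion $0 \in \nabla_{\bbeta} f(\bbeta) + \lambda\,\partial \|\bbeta\|_1$ with $\bz$ being a particular selection from the subdifferential set, rather than as a naive gradient equation. Beyond this, no substantive obstacles are expected, since convexity holds on the entire feasible set and a Slater-type qualification is available whenever the budget $C$ is positive and the origin is feasible, so strong duality and the validity of the KKT conditions at the optimum are automatic.
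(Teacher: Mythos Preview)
Your proposal is correct and follows essentially the same route as the paper: introduce the auxiliary variable $w$, form the Lagrangian, and read off stationarity in $\bbeta$ via the subdifferential of $\|\bbeta\|_1$. The only cosmetic difference is that the paper imposes $w=\sum_{j\ne i}\sum_{k}|\beta_{ijk}|$ as an equality constraint with an unrestricted multiplier $\eta$, whereas you use the epigraph inequality and argue tightness; either way, stationarity in $w$ links the multipliers to $\lambda$, and the paper then simply sets the budget multiplier $\mu=0$ (the same manoeuvre you describe as ``absorbing the budget multiplier into $\lambda$''), so the two derivations coincide.
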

(Check Appendix \ref{proof:kkt} for details.)

\subsection{Assumptions} 
\label{subsec:assumptions}
In this subsection, we will discuss the key technical assumptions which are required for our theoretical results. For notational clarity, we define the following two quantities:
\begin{align}
\begin{split}
\bH &\triangleq \frac{1}{N} \sum_{x \in D} \bpsi(x) \bpsi(x)^\T \\
\bHhat &\triangleq \frac{1}{N} \sum_{x \in D} \big( \bpsi(x) \bpsi(x)^\T + \bpsi(x) \bgamma(x)^\T \big)
\end{split}
\end{align}
It should be noted that $\bHhat = \E{\bH}$. For our first assumption, we want $\bH$ restricted to in-neighbors to be an invertible matrix. Formally, 
\begin{assumption}
	\label{assum:positive definiteness}
	We assume that $\bH$ is positive definite matrix, i.e., $\Lambda_{\min} (\res{\bH}{\supp\supp}) = C_{\min} > 0$ where $\Lambda_{\min}$ denotes the minimum eigenvalue.
\end{assumption} 
This assumption readily implies that $\bHhat$ is invertible with sufficient number of samples $N$. More technically, 
\begin{lemma}
	\label{lem:positive definiteness}
	If   $\Lambda_{\min} (\res{\bH}{\supp\supp}) = C_{\min} > 0$ then  $\Lambda_{\min} (\res{\bHhat}{\supp\supp}) = C_{\min} - \epsilon$ for some $\epsilon > 0$ with probability at least $\exp(\frac{-N\epsilon^2}{2 \budget^2 \sigma^2} + \mathcal{O}(rd) )$.
\end{lemma}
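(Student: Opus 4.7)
The plan is to invoke Weyl's inequality and concentrate the random deviation $\res{\bHhat}{\supp\supp} - \res{\bH}{\supp\supp}$ in spectral norm. Specifically, Weyl gives $\Lambda_{\min}(\res{\bHhat}{\supp\supp}) \geq \Lambda_{\min}(\res{\bH}{\supp\supp}) - \|\res{\bHhat}{\supp\supp} - \res{\bH}{\supp\supp}\|_2$, so it suffices to show this spectral-norm deviation is at most $\epsilon$ with the claimed probability. From the definitions of $\bH$ and $\bHhat$, the difference simplifies to $\Delta \triangleq \tfrac{1}{N}\sum_{x \in D} \res{\bpsi(x)}{\supp}\,\res{\bgamma(x)}{\supp}^\T$, whose entries (conditionally on the joint actions) are averages of products of the bounded deterministic basis values $\psi_k$ and the independent zero-mean sub-Gaussian noises $\gamma_k$ with variance proxy $\sigma^2$. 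Thus $\Delta$ is a centered random matrix whose spectral norm we need to control.

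Next I would bound $\|\Delta\|_2$ by a standard covering-net argument. For a fixed unit vector $v \in \real^{rd}$,
\[
v^\T \Delta v = \frac{1}{N}\sum_{x \in D} \big(v^\T \res{\bpsi(x)}{\supp}\big)\big(v^\T \res{\bgamma(x)}{\supp}\big),
\]
and each summand factors as a deterministic scalar $v^\T \res{\bpsi(x)}{\supp}$, uniformly bounded using the $\overline{\psi}$ bound on $\psi_k$ together with the action-budget constraint $\budget$, times the sub-Gaussian scalar $v^\T \res{\bgamma(x)}{\supp}$ of variance proxy $\sigma^2$. An average of $N$ such independent centered products is therefore sub-Gaussian with variance proxy on the order of $\budget^2\sigma^2/N$, which gives the pointwise Hoeffding-type tail bound $\prob(|v^\T \Delta v| > \epsilon) \leq 2\exp(-N\epsilon^2/(2\budget^2\sigma^2))$.

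To upgrade this to a uniform bound, I would take a $\tfrac{1}{4}$-net $\mathcal{N}$ of the unit sphere in $\real^{rd}$, of cardinality at most $9^{rd} = \exp(\mathcal{O}(rd))$, apply the standard inequality $\|\Delta\|_2 \leq 2\max_{v \in \mathcal{N}}|v^\T \Delta v|$, and union-bound the pointwise tail over $\mathcal{N}$. This produces a failure probability of order $\exp(\mathcal{O}(rd) - N\epsilon^2/(2\budget^2\sigma^2))$, which combined with Weyl's inequality yields the stated lower bound on $\Lambda_{\min}(\res{\bHhat}{\supp\supp})$. The main obstacle I expect is aligning the sub-Gaussian variance proxy of the bilinear form with the constant $\budget^2\sigma^2$ appearing in the statement, since a naive bound on $v^\T \res{\bpsi(x)}{\supp}$ through Cauchy--Schwarz introduces an extra factor of $rd\,\overline{\psi}^2$ into the variance proxy; the remaining ingredients (Weyl, the covering net, and the union bound) are standard.
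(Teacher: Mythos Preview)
Your proposal is correct and follows essentially the same route as the paper: the paper also writes $\Lambda_{\min}(\res{\bHhat}{\supp\supp}) \geq C_{\min} + \min_{\|y\|_2=1} y^\T \Delta y$ via the variational characterization (equivalent to your Weyl step), bounds the quadratic form $y^\T\Delta y$ as a sub-Gaussian average, and then upgrades to a uniform bound by an $\epsilon$-net of cardinality $\exp(\mathcal{O}(rd))$. Regarding the obstacle you flag, the paper dispatches it simply by introducing the assumption $\|\bpsi\|_2 \leq \budget$ inside the proof, which forces the per-sample coefficient $a_s = y^\T\res{\bpsi(x)}{\supp}$ to satisfy $|a_s|\leq \budget$ and yields exactly the variance proxy $\budget^2\sigma^2/N$.
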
  
(Check Appendix \ref{proof:positive definiteness} for details.)

The above lemma ensures that our optimization problem \eqref{eq:estimateprob} has a unique solution and all our theoretical guarantees hold for this particular solution. As for our second assumption, we require that non-in-neighbors of a player do not affect the player's action too much. We achieve this by proposing a mutual incoherence assumption.
\begin{assumption}
	\label{assum:mutual incoherence}
	We assume that $\|| \res{\bH}{\supp^c\supp} \res{\bH}{\supp\supp}^{-1}\||_{\infty} \leq 1 - \alpha$ for some $0 < \alpha < 1$.
\end{assumption}
As with Assumption \ref{assum:positive definiteness}, it can be shown easily that with enough number of samples Assumption \ref{assum:mutual incoherence} is satisfied in sample setting with sufficient number of samples $N$.
\begin{lemma}
	\label{lem:mutual incoherence}
	If $\|| \res{\bH}{\supp^c\supp} \res{\bH}{\supp\supp}^{-1}\||_{\infty} \leq 1 - \alpha$ for some $0 < \alpha < 1$ then $\|| \res{\bHhat}{\supp^c\supp} \res{\bHhat}{\supp\supp}^{-1}\||_{\infty} \leq 1 - \frac{\alpha}{2}$ with probability at least $1 - \exp(\frac{-K NC_{\min}^2 \alpha^2}{(1-\alpha)^2 r^3 d^3 \widebar\psi^2\sigma^2} + \log(2r^2(n-d)d)) - \exp(\frac{-N \alpha}{6 r^2 d^2 \widebar\psi^2\sigma^2} + \log(2r^2(n-d)d)) - \exp(\frac{-2N}{\budget^2\sigma^2C_{\min}^2} + \mathcal{O}(rd)) - 2 \exp(\frac{-N\alpha^2}{24r^3d^3\widebar\psi^2\sigma^2} + \log(r^2d^2)) $ for a constant $K>0$.
\end{lemma}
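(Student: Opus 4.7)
The plan is to approximate $\res{\bHhat}{\supp^c\supp}\res{\bHhat}{\supp\supp}^{-1}$ by its deterministic counterpart $\res{\bH}{\supp^c\supp}\res{\bH}{\supp\supp}^{-1}$ via a telescoping decomposition, and then control the residual by concentration of the sub-Gaussian noise. Adding and subtracting $\res{\bH}{\supp^c\supp}\res{\bHhat}{\supp\supp}^{-1}$ and using $A^{-1} - B^{-1} = A^{-1}(B-A)B^{-1}$, write
\begin{align*}
\res{\bHhat}{\supp^c\supp}\res{\bHhat}{\supp\supp}^{-1} - \res{\bH}{\supp^c\supp}\res{\bH}{\supp\supp}^{-1}
&= \underbrace{\big(\res{\bHhat}{\supp^c\supp} - \res{\bH}{\supp^c\supp}\big)\res{\bHhat}{\supp\supp}^{-1}}_{T_1} \\
&\quad + \underbrace{\res{\bH}{\supp^c\supp}\res{\bH}{\supp\supp}^{-1}\big(\res{\bH}{\supp\supp} - \res{\bHhat}{\supp\supp}\big)\res{\bHhat}{\supp\supp}^{-1}}_{T_2}.
\end{align*}
It suffices to show $\|T_1\|_{\infty,\infty} + \|T_2\|_{\infty,\infty} \leq \alpha/2$ with the stated probability.

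For the individual factors, I would use: (i) Assumption \ref{assum:mutual incoherence} to get $\|\res{\bH}{\supp^c\supp}\res{\bH}{\supp\supp}^{-1}\|_{\infty,\infty} \leq 1-\alpha$ for $T_2$; (ii) Lemma \ref{lem:positive definiteness} with the choice $\epsilon = C_{\min}/2$, together with the standard comparison $\|A\|_{\infty,\infty} \leq \sqrt{rd}\,\|A\|_2$ for a symmetric $rd\times rd$ matrix, to obtain $\|\res{\bHhat}{\supp\supp}^{-1}\|_{\infty,\infty} \leq 2\sqrt{rd}/C_{\min}$, contributing the term $\exp(-2N/(\budget^2\sigma^2 C_{\min}^2) + \mathcal{O}(rd))$; and (iii) entry-wise concentration of $\bHhat - \bH = \frac{1}{N}\sum_{x\in D}\bpsi(x)\bgamma(x)^\T$. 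Since each entry of this difference is an average of $N$ independent zero-mean sub-Gaussians with variance proxy at most $\widebar\psi^{\,2}\sigma^2$ (noise $\gamma$ multiplied by a deterministic value bounded by $\widebar\psi$), Hoeffding gives $\Pr(|(\bHhat-\bH)_{ab}| > t) \leq 2\exp(-Nt^2/(2\widebar\psi^{\,2}\sigma^2))$. Combining with $\|M\|_{\infty,\infty} \leq (\text{\# cols})\cdot\max_{a,b}|M_{ab}|$ and union bounds over the $r^2 d(n-d)$ entries of the off-diagonal block (for $T_1$) and the $r^2d^2$ entries of the diagonal block (for $T_2$) yields the two concentration terms in the stated probability, with thresholds chosen as $t_1 \asymp \alpha C_{\min}/((rd)^{3/2})$ for $T_1$ and $t_2 \asymp \alpha C_{\min}/((1-\alpha)(rd)^{3/2})$ for $T_2$ so that each summand is at most $\alpha/4$.

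The main obstacle is bookkeeping: propagating the $(1-\alpha)$ factor from Assumption \ref{assum:mutual incoherence} through the triple product in $T_2$, and balancing the three matrix-norm bounds so that the tightest concentration radius is charged against each deviation block. The $\log(2r^2(n-d)d)$ union-bound factor in the first exponent arises from controlling the diagonal-block deviation in $T_2$ (despite its indexing, since the entries actually relevant to the final infinity-operator norm pass through $\res{\bH}{\supp^c\supp}\res{\bH}{\supp\supp}^{-1}$), and the $\log(r^2 d^2)$ factor in the last exponent comes from the diagonal block itself. Once the two concentration events, the invertibility event from Lemma \ref{lem:positive definiteness}, and an auxiliary event bounding $\|\res{\bH}{\supp\supp}-\res{\bHhat}{\supp\supp}\|_{\infty,\infty}$ enough to maintain the $2\sqrt{rd}/C_{\min}$ bound on the inverse's infinity-norm are all intersected, a final union bound yields the probability in the statement.
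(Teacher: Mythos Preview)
Your approach is correct and uses the same ingredients as the paper (entry-wise sub-Gaussian concentration of $\bHhat-\bH$, Lemma~\ref{lem:positive definiteness} for $\|\res{\bHhat}{\supp\supp}^{-1}\|_2\le 2/C_{\min}$, and the norm comparison $\|A\|_{\infty,\infty}\le\sqrt{rd}\,\|A\|_2$), but your telescoping is different. You split the perturbation into two pieces,
\[
(\res{\bHhat}{\supp^c\supp}-\res{\bH}{\supp^c\supp})\res{\bHhat}{\supp\supp}^{-1}
\;+\;
\res{\bH}{\supp^c\supp}\res{\bH}{\supp\supp}^{-1}(\res{\bH}{\supp\supp}-\res{\bHhat}{\supp\supp})\res{\bHhat}{\supp\supp}^{-1},
\]
whereas the paper uses a four-term expansion
\[
\res{\bHhat}{\supp^c\supp}\res{\bHhat}{\supp\supp}^{-1}=T_1+T_2+T_3+T_4,\qquad
T_4=\res{\bH}{\supp^c\supp}\res{\bH}{\supp\supp}^{-1},
\]
with $T_1=\res{\bH}{\supp^c\supp}(\res{\bHhat}{\supp\supp}^{-1}-\res{\bH}{\supp\supp}^{-1})$, $T_2=(\res{\bHhat}{\supp^c\supp}-\res{\bH}{\supp^c\supp})\res{\bH}{\supp\supp}^{-1}$, and a separate cross term $T_3=(\res{\bHhat}{\supp^c\supp}-\res{\bH}{\supp^c\supp})(\res{\bHhat}{\supp\supp}^{-1}-\res{\bH}{\supp\supp}^{-1})$, each bounded by $\alpha/6$. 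Your two-term version is cleaner since the cross term is absorbed, but the specific four exponentials in the lemma statement (in particular the $\alpha$-linear term and the $\log(r^2d^2)$ term) come precisely from the paper bounding $T_3$ via a separate auxiliary estimate on $\|\res{\bHhat}{\supp\supp}^{-1}-\res{\bH}{\supp\supp}^{-1}\|_{\infty,\infty}$. Your route would produce a bound of the same order with only three events (Lemma~\ref{lem:positive definiteness}, off-diagonal deviation, diagonal deviation) and slightly different constants, so your confusion about matching the $\log(2r^2(n-d)d)$ factor to the diagonal block is an artifact of trying to reproduce the paper's exact exponents from a different decomposition rather than a gap in the argument.
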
 
(Check Appendix \ref{proof:mutual incoherence} for details.)

While mutual incoherence is new to graphical games with continuous actions and non-parametric utilities, it has been a standard assumption in various estimation problems such as compressed sensing~\cite{wainwright2009sharp}, Markov random fields~\cite{ravikumar2010high}, non-parametric regression~\cite{ravikumar2007spam}, diffusion networks~\cite{daneshmand2014estimating}, among others.

\begin{assumption}
	\label{assum:min wieght}
	We assume that $\min_{j \in \supp} |\beta_{ijk}^*| > \Delta(\lambda, C_{\min}, d)$ where 
	\begin{align*}
	&\Delta(\lambda, C_{\min}, d) = \frac{2 \lambda}{C_{\min}} ( 2\sqrt{d} +\frac{1} {\sqrt{|\supp |}} )
	\end{align*}
\end{assumption}
The minimum weight assumption is a standard practice in the literature employing the primal-dual witness technique \cite{wainwright2009sharp,ravikumar2010high}. This assumption ensures that the coefficients $\beta_{ijk}^*$ are not arbitrarily close to zero which can make inference very difficult for any method. 

\subsection{Primal-Dual Witness Method}
\label{subsec:primal-dual}
We use primal-dual witness framework \cite{wainwright2009sharp} to prove our theoretical results. We note that $\beta_{ijk}^* = 0, \forall j \ne \supp$. We also assume that $\beta_{ijk} = 0, \forall j \ne \supp$. As first step of our proof, we will justify this choice. Then we bound the Euclidean norm distance between the estimated $\bbeta$ and the true $\bbeta^*$. Subsequently, we show that if the non-zero entries in $\bbeta^*$ satisfy a minimum weight criteria then the recovered $\bbeta$ matches $\bbeta^*$ up to its sign. This allows us to identify the in-neighbors for each player. 
\begin{theorem}
	\label{thm:main theorem}
	Consider a continuous-action graphical game $\game$ such that Assumptions \ref{assum:positive definiteness}, \ref{assum:mutual incoherence} and \ref{assum:min wieght} are satisfied for each player and payoff function for each player is decomposable according to Equation \eqref{eq:pairwise utility}.  Let $\lambda > g(\overline{\psi}, n, \delta, N, d, C, \alpha)$ and $N = \mathcal{O}(\frac{\widebar{\psi}^2 \sigma^2 C^2 \delta r^3d^3}{\epsilon^2}\log (rn))$, then we prove the following claims by solving the optimization problem \eqref{eq:estimateprob}.  
	\begin{enumerate}
			\item We recover the correct set of non-neighbors for each player $i$.
			\item We recover the exact structure of the graphical game $\game$. 
			\item Furthermore, we estimate a payoff function which is $\epsilon$ close to the true payoff by by estimating $\bbeta$ for each player $i$ by solving the optimization problem \eqref{eq:estimateprob}.
	\end{enumerate} 	
	where $C_{\min}$  is the minimum eigenvalue of $\res{\bH}{\supp\supp}$ and $\delta$ is an arbitrary parameter which depends on $r$. 
		
	The functions $g(\cdot)$ is defined as follows:
	\begin{align*}
		&g(\overline{\psi}, n, \delta, N, d, C, \alpha) = \max\big( 20 \sqrt{2} \overline\psi \sqrt{\frac{d\delta\log d}{N}}, 10 \sqrt{2} \sigma \overline\psi C\sqrt{\frac{\log d}{N}},\frac{20 \sqrt{2}\overline\psi }{1 - \frac{\alpha}{2}}\sqrt{\frac{d\delta\log (n-d)}{N}},  \\
		&\frac{10 \sqrt{2} \sigma \overline\psi C }{1 - \frac{\alpha}{2}} \sqrt{\frac{\log (n-d)}{N}}  \big)
	\end{align*}
\end{theorem}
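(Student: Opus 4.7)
The plan is to follow the primal-dual witness (PDW) construction in the spirit of Wainwright \cite{wainwright2009sharp}, specialized to the Fourier-truncated regression cast in \eqref{eq:estimateprob}. First I would form the ``restricted'' optimizer: solve \eqref{eq:estimateprob} over the support $\supp$ only (forcing $\beta_{ijk}=0$ for $j\in\supp^c$) to obtain a candidate $\hat{\bbeta}_{\supp}$, then set $\hat{\bbeta}_{\supp^c}=0$ and read off the dual variable $\bz_{\supp^c}$ from stationarity. By Lemma \ref{lem:positive definiteness}, $\res{\bHhat}{\supp\supp}$ is invertible with high probability, so $\hat{\bbeta}_{\supp}$ is well defined. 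The proof then reduces to two verifications: strict dual feasibility $\|\bz_{\supp^c}\|_\infty < 1$, and correct signs on $\supp$.

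For strict dual feasibility, I would rewrite the stationarity condition at $\hat{\bbeta}$ as
\begin{align*}
\res{\bHhat}{\supp^c\supp}(\hat{\bbeta}_{\supp}-\bbeta^*_{\supp}) - \res{\mb{W}}{\supp^c} + \lambda\,\bz_{\supp^c} = 0,
\end{align*}
where $\mb{W}$ bundles all noise and truncation terms (the $\bpsi\bgamma^\T\bbeta^*$ contribution, the sample empirical-process deviation of the loss, and the residual from dropping basis indices $k>r$, whose tail is controlled by the convergent-series assumption on $|\beta^*_{ijk}|$). Solving the analogous equation on $\supp$ for $\hat{\bbeta}_{\supp}-\bbeta^*_{\supp}$ and substituting yields
\begin{align*}
\bz_{\supp^c} = \res{\bHhat}{\supp^c\supp}\res{\bHhat}{\supp\supp}^{-1}\bigl(\bz_{\supp} + \tfrac{1}{\lambda}\res{\mb{W}}{\supp}\bigr) - \tfrac{1}{\lambda}\res{\mb{W}}{\supp^c}.
\end{align*}
Applying the sample mutual-incoherence bound of Lemma \ref{lem:mutual incoherence} (value $1-\alpha/2$) handles the first term, so it suffices to show $\|\mb{W}\|_\infty \le \lambda\alpha/4$. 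This is a sub-Gaussian tail argument for the inner products of $\bpsi,\bgamma$ over $N$ samples, using $\sup |\psi_k| \le \overline\psi$, $\|x\|_2 \le \budget$, and the budget $C$ on $\bbeta$; taking $\lambda$ at least the stated $g(\overline\psi,n,\delta,N,d,C,\alpha)$ and $N=\Omega(r^3 d^3 \log(rn))$ closes the bound after a union bound over $r(n-d)$ coordinates.

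Next, for correct signs on $\supp$ I would bound $\|\hat{\bbeta}_{\supp}-\bbeta^*_{\supp}\|_\infty$. From the same stationarity equation on $\supp$,
\begin{align*}
\hat{\bbeta}_{\supp}-\bbeta^*_{\supp} = \res{\bHhat}{\supp\supp}^{-1}\bigl(\res{\mb{W}}{\supp} - \lambda\,\bz_{\supp}\bigr),
\end{align*}
so $\|\hat{\bbeta}_{\supp}-\bbeta^*_{\supp}\|_\infty \le \|\res{\bHhat}{\supp\supp}^{-1}\|_{\infty,\infty}(\|\mb{W}\|_\infty + \lambda)$. Using $\|\res{\bHhat}{\supp\supp}^{-1}\|_{\infty,\infty} \le \sqrt{d}/\Lambda_{\min} \le \sqrt{d}/(C_{\min}-\epsilon_0)$ from Lemma \ref{lem:positive definiteness} and absorbing into Assumption \ref{assum:min wieght}'s $\Delta(\lambda,C_{\min},d)$, the estimate cannot flip signs, so $\sign(\hat{\bbeta}_{ijk}) = \sign(\beta^*_{ijk})$ for $j\in\supp$ and $\hat{\bbeta}_{ijk}=0$ for $j\in\supp^c$. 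This gives claims 1 and 2 (the local recovery on player $i$ is combined across players to obtain the full graph).

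For claim 3, the $\epsilon$-closeness of payoffs, I would translate the coefficient bound into a function-space bound: since the $\psi_k$ are orthonormal,
\begin{align*}
\|\overline u_i - u_i\|_2^2 \le 2\sum_{j,k \le r}(\hat\beta_{ijk}-\beta^*_{ijk})^2 + 2\sum_{j}\sum_{k>r}(\beta^*_{ijk})^2,
\end{align*}
controlling the first sum by $d\|\hat{\bbeta}_{\supp}-\bbeta^*_{\supp}\|_\infty^2$ times $r$, and the second by the $\delta$-parameterized tail of the convergent series (this is where the $\delta$ in the theorem enters). Choosing $N=\mathcal{O}(\overline\psi^2\sigma^2 C^2 \delta r^3 d^3 \epsilon^{-2}\log(rn))$ balances everything to drive the total below $\epsilon$. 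The main obstacle I expect is the careful accounting of the noise/truncation vector $\mb{W}$: one must separate the three sources (sample concentration of $\bH$, the $\bpsi\bgamma^\T$ cross term, and the infinite-tail residual) and union-bound them without losing more than logarithmic factors in $n$, which is what forces the exact form of $g(\cdot)$ and the scaling $r^3 d^3$ in $N$.
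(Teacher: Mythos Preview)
Your proposal is correct and follows essentially the same primal-dual witness argument as the paper: restrict to $\supp$, invert the stationarity condition there, substitute into the $\supp^c$ stationarity to express $\bz_{\supp^c}$, and control the noise/truncation terms via sub-Gaussian concentration (the paper's Lemmas~\ref{lem:psi gamma beta} and~\ref{lem:psi beta tildepsi}) together with the sample mutual-incoherence bound. The only noteworthy deviations are cosmetic: the paper keeps the four error terms separate rather than bundling them into a single $\mb{W}$, bounds $\|\hat\bbeta_{\supp}-\bbeta^*_{\supp}\|_\infty$ through the $\ell_2$ norm and the spectral bound $\|\res{\bHhat}{\supp\supp}^{-1}\|_2\le 2/C_{\min}$ (rather than your $\|\cdot\|_{\infty,\infty}$ route, which picks up an extra $\sqrt{r}$ factor that would not exactly match the stated $\Delta(\lambda,C_{\min},d)$), and for claim~3 it gives a pointwise bound $|u_i^*(x)-\hat u_i(x)|\le \|\bbeta^*_{\supp}-\bbeta_{\supp}\|_2\|\bpsi_{\supp}\|_2 + \overline\psi\,\delta$ via Cauchy--Schwarz rather than your $L_2$ function-norm bound via orthonormality.
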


By doing some simple algebraic manipulation it is easy to see that, 
\begin{align}
\begin{split}
&\frac{1}{N} \sum_{x \in D} (\widetilde{u}_i(x) - \overline{u}_i(x))^2=\frac{1}{N} \sum_{x \in D} ( (\bbeta^* - \bbeta )^\T \widetilde{\bpsi} +  \sum_{j \ne i} \sum_{k=r+1}^\infty \beta_{ijk}^* \widetilde{\psi}_k(x_i, x_j) + \bbeta^\T \bgamma(x))^2 
\end{split}
\end{align}
Here, $\bbeta^* \in \real^{r(n-1) \times 1}, \widetilde{\bpsi} \in \real^{r(n-1) \times 1} $ and $\bgamma \in \real^{r(n-1) \times 1}$ are indexed by $(j,k), j \in \seq{n}, j \ne i, k \in \seq{r}$. 
Thus Stationarity KKT condition of Equation \eqref{eq:kkt} becomes:
\begin{align*}
\begin{split}
&\frac{2}{N} \sum_{x \in D} (\bgamma - \widetilde{\bpsi}(x))(  \widetilde{\bpsi}(x)^\T (\bbeta^* - \bbeta ) + \sum_{j \ne i} \sum_{k=r+1}^\infty \beta_{ijk}^* \widetilde{\psi}_k(x_i, x_j) + \bgamma^\T(x)  \bbeta)  + \lambda \bz = 0
\end{split}
\end{align*}
Let $\bbeta = \begin{bmatrix} \bbeta_{\supp} \\ \bbeta_{\supp^c} \end{bmatrix}$, where $\bbeta_{\supp}$ contains the entries $\beta_{ijk} \ne 0$ and $\bbeta_{\supp^c}$ contains the entries $\beta_{ijk} = 0$. From the above, we have
\begin{align*}
\begin{split}
&\frac{2}{N} \sum_{x \in D} (\bgamma - \widetilde{\bpsi}(x))(  \widetilde{\bpsi}_{\supp}(x)^\T (\bbeta_{\supp}^* - \bbeta_{\supp} ) + \sum_{j \ne i} \sum_{k=r+1}^\infty \beta_{ijk}^* \widetilde{\psi}_k(x_i, x_j) + \bgamma_{\supp}^\T(x) \bbeta_{\supp})  + \lambda \bz = 0
\end{split}
\end{align*}
Separating the indices for the support (in-neighbors) and the non-support, we can write the above equation in two parts. The first for the indices in the support,
\begin{align}
\label{eq:support}
\begin{split}
&\frac{2}{N} \sum_{x \in D} - \bpsi_{\supp}(x)(  \widetilde{\bpsi}_{\supp}(x)^\T (\bbeta_{\supp}^* - \bbeta_{\supp} ) + \sum_{j \ne i} \sum_{k=r+1}^\infty \beta_{ijk}^* \widetilde{\psi}_k(x_i, x_j) + \bgamma_{\supp}^\T(x) \bbeta_{\supp})  + \lambda \bz_{\supp} = 0
\end{split}
\end{align}
and the second for the indices in the non-support,
\begin{align}
\label{eq:nonsupport}
\begin{split}
&\frac{2}{N} \sum_{x \in D} - \bpsi_{\supp^c}(x)(  \widetilde{\bpsi}_{\supp}(x)^\T (\bbeta_{\supp}^* - \bbeta_{\supp} ) +  \sum_{j \ne i} \sum_{k=r+1}^\infty \beta_{ijk}^* \widetilde{\psi}_k(x_i, x_j) + \bgamma_{\supp}^\T(x) \bbeta_{\supp})  + \lambda \bz_{\supp^c} = 0 \ .
\end{split}
\end{align} 
We make use of Assumption \ref{assum:positive definiteness} and rearrange Equation \eqref{eq:support} to get,
\begin{align}
\label{eq:delbeta}
\begin{split}
&\bbeta_{\supp}^* - \bbeta_{\supp} =  - (\sum_{x \in D}  \bpsi_{\supp}(x) \widetilde{\bpsi}_{\supp}(x)^\T)^{\minus 1}  (\sum_{x \in D} \bpsi_{\supp}(x) \sum_{j \ne i} \sum_{k=r+1}^\infty \beta_{ijk}^* \widetilde{\psi}_k(x_i, x_j)) -( \sum_{x \in D}  \bpsi_{\supp}(x)  \widetilde{\bpsi}_{\supp}(x)^\T)^{\minus 1} \\
&(\sum_{x \in D} \bpsi_{\supp}(x) \bgamma_{\supp}^\T(x) \bbeta_{\supp}) + \lambda ( \sum_{x \in D} \bpsi_{\supp}(x)  \widetilde{\bpsi}_{\supp}(x)^\T)^{\minus 1} \bz_{\supp}
\end{split}
\end{align}
Substituting Equation \eqref{eq:delbeta} in Equation \eqref{eq:nonsupport} and rearranging the terms we get,
\begin{align*}
\begin{split}
&\lambda \bz_{\supp^c}  = - \frac{2}{N} \res{\bHhat}{\supp^c\supp}  \res{\bHhat}{\supp\supp}^{\minus 1} (\sum_{y \in D} \bpsi_{\supp}(y) \sum_{j \ne i} \sum_{k=r+1}^\infty  \beta_{ijk}^* \widetilde{\psi}(y_i, y_j) ) - \frac{2}{N} \res{\bHhat}{\supp^c\supp}  \res{\bHhat}{\supp\supp}^{\minus 1} (\sum_{y \in D} \bpsi_{\supp}(y) \bgamma_{\supp}^\T(y)  \bbeta_{\supp}) +\\
& \lambda \frac{2}{N} \res{\bHhat}{\supp^c\supp}  \res{\bHhat}{\supp\supp}^{\minus 1} \bz_{\supp}  + \frac{2}{N} \sum_{x \in D} \bpsi_{\supp^c}(x) \sum_{j \ne i} \sum_{k=r+1}^\infty \beta_{ijk}^* \widetilde{\psi}_k(x_i, x_j) + \frac{2}{N} \sum_{x \in D} \bpsi_{\supp^c}(x) \bgamma_{\supp}^\T(x) \bbeta_{\supp}  
\end{split}
\end{align*} 
Using the triangle norm inequality and noting that $\|Ab\|_{\infty} \leq |||A|||_{\infty} \|b\|_{\infty}$, we obtain
\begin{align}
\begin{split}
&  \lambda \| \bz_{\supp^c} \|_{\infty}  \leq  \frac{2}{N} ||| \res{\bHhat}{\supp^c\supp}  \res{\bHhat}{\supp\supp}^{\minus 1}|||_{\infty} \| (\sum_{y \in D} \bpsi_{\supp}(y) \sum_{j \ne i} \sum_{k=r+1}^\infty \beta_{ijk}^* \widetilde{\psi}(y_i, y_j)) \|_{\infty} + \frac{2}{N} ||| \res{\bHhat}{\supp^c\supp}  \res{\bHhat}{\supp\supp}^{\minus 1} |||_{\infty} \\
& \| (\sum_{y \in D} \bpsi_{\supp}(y) \bgamma_{\supp}^\T(y)  \bbeta_{\supp}) \|_{\infty} + \lambda \frac{2}{N}  ||| \res{\bHhat}{\supp^c\supp}  \res{\bHhat}{\supp\supp}^{\minus 1} |||_{\infty} \| \bz_{\supp} \|_{\infty}  + \frac{2}{N} \| \sum_{x \in D} \bpsi_{\supp^c}(x) \sum_{j \ne i} \sum_{k=r+1}^\infty \beta_{ijk}^* \widetilde{\psi}_k(x_i, x_j) \|_{\infty}+\\
&\frac{2}{N} \| \sum_{x \in D} \bpsi_{\supp^c}(x) \bgamma_{\supp}^\T(x) \bbeta_{\supp} \|_{\infty} 
\end{split}
\end{align}
Using Assumption \ref{assum:mutual incoherence} and noting that $\| \bz_{\supp} \|_{\infty} \leq 1$, we have
\begin{align}
\label{eq:normz}
\begin{split}
&\lambda \| \bz_{\supp^c} \|_{\infty}  \leq  (1 - \frac{\alpha}{2}) \frac{2}{N} \| (\sum_{y \in D} \bpsi_{\supp}(y) \sum_{j \ne i} \sum_{k=r+1}^\infty \beta_{ijk}^*   \widetilde{\psi}(y_i, y_j)) \|_{\infty} + (1 - \frac{\alpha}{2}) \frac{2}{N} \| (\sum_{y \in D} \bpsi_{\supp}(y) \bgamma_{\supp}^\T(y)  \bbeta_{\supp}) \|_{\infty} \\
&+ \lambda \frac{2}{N} (1 - \frac{\alpha}{2})   + \frac{2}{N} \| \sum_{x \in D} \bpsi_{\supp^c}(x) \sum_{j \ne i} \sum_{k=r+1}^\infty \beta_{ijk}^* \widetilde{\psi}_k(x_i, x_j) \|_{\infty}+\frac{2}{N} \| \sum_{x \in D} \bpsi_{\supp^c}(x) \bgamma_{\supp}^\T(x) \bbeta_{\supp} \|_{\infty} 
\end{split}
\end{align}

We want to show that $\| \bz_{\supp^c} \|_{\infty} < 1$, which ensures that $\beta_{ijk} = 0, \forall j \notin \supp$. We do this by bounding each term in Equation \eqref{eq:normz} using the following lemmas.
\begin{lemma}
	\label{lem:psi gamma beta}
	For some $\epsilon > 0$, we have
	\begin{align}
	\begin{split}
	&\prob(\frac{1}{N} \| \sum_{x \in D} \bpsi_{\supp^c}(x) \bgamma_{\supp}^\T(x) \bbeta_{\supp}  \|_{\infty} \geq \epsilon) \leq 2 (n-d) \exp( \frac{- N \epsilon^2}{ 2  \sigma^2 \overline\bpsi^2 C^2 })
	\end{split}
	\end{align}
	as well as
	\begin{align}
	\begin{split}
	&\prob(\frac{1}{N} \| \sum_{y \in D} \bpsi_{\supp}(y) \bgamma_{\supp}^\T(y) \bbeta_{\supp}  \|_{\infty} \geq \epsilon) \leq 2 d \exp(\frac{ -N \epsilon^2}{ 2  \sigma^2 \overline\bpsi^2 C^2 })
	\end{split}
	\end{align}
\end{lemma}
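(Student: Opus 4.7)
The plan is to bound each coordinate of the vector separately using a conditional sub-Gaussian tail and then finish with a union bound. Fix a coordinate $(j',k')$ with $j' \in \supp^c$ of $\sum_{x \in D}\bpsi_{\supp^c}(x)\bgamma_{\supp}^\T(x)\bbeta_{\supp}$. Expanding, that coordinate equals
\begin{align*}
Y_{j',k'} \;=\; \sum_{x \in D} \psi_{k'}(x_i, x_{j'}) \,\cdot\, \sum_{(j,k)\,:\,j \in \supp} \beta_{ijk}\,\gamma_k(x_i, x_j).
\end{align*}
First I would condition on the joint actions $\{x\}_{x \in D}$, which turns every $\psi_{k'}(x_i, x_{j'})$ into a deterministic number bounded by $\overline{\psi}$. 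By the sampling mechanism, the noise variables $\gamma_k(x_i, x_j)$ are zero-mean, independent across samples and across $(j,k)$, and sub-Gaussian with variance proxy $\sigma^2$. Hence the inner sum $W(x) := \sum_{(j,k)\,:\,j \in \supp} \beta_{ijk}\,\gamma_k(x_i, x_j)$ is zero-mean sub-Gaussian with variance proxy at most $\sigma^2\|\bbeta_{\supp}\|_2^2$, and the $W(x)$ values are mutually independent across $x \in D$. Consequently $Y_{j',k'}$ is sub-Gaussian with variance proxy at most
\begin{align*}
\sigma^2\,\|\bbeta_{\supp}\|_2^2 \sum_{x \in D}\psi_{k'}^2(x_i, x_{j'}) \;\leq\; N\,\overline{\psi}^2\,\sigma^2\,\|\bbeta_{\supp}\|_2^2.
\end{align*}

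Next I would invoke the budget constraint in \eqref{eq:estimateprob} to pass from $\ell_2$ to $\ell_1$: $\|\bbeta_{\supp}\|_2 \leq \|\bbeta_{\supp}\|_1 \leq C$. A standard sub-Gaussian tail inequality then gives
\begin{align*}
\prob\bigl(|Y_{j',k'}| \geq N\epsilon\bigr) \;\leq\; 2\exp\!\left(-\frac{N\epsilon^2}{2\,\overline{\psi}^2\,\sigma^2\,C^2}\right),
\end{align*}
and since the right-hand side is free of the sample realization, the bound holds unconditionally. A union bound over the coordinates $(j',k')$ with $j' \in \supp^c$ produces the multiplicative factor $(n-d)$ in front of the exponential, which is the first claimed inequality. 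The second inequality follows by the identical argument with $j' \in \supp$ in place of $j' \in \supp^c$, swapping $(n-d)$ for $d$ as the union-bound multiplier.

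I do not anticipate a real obstacle here; the argument is routine Hoeffding-style concentration. The one step meriting care is the sub-Gaussianity of $W(x)$ with the quoted variance proxy, which uses independence of $\gamma_k$ across the pairs $(j,k)$ together with the inequality $\|\bbeta_{\supp}\|_2^2 \leq C^2$ coming from the $\ell_1$ budget, so that the variance proxy is controlled by the prescribed constant $C$ rather than by some uncontrolled dimension-dependent quantity. With those two observations verified, the rest is a direct application of the sub-Gaussian tail plus a union bound.
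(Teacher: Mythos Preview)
Your proposal is correct and follows essentially the same approach as the paper: per-coordinate sub-Gaussian concentration (using the budget constraint $\|\bbeta_{\supp}\|_1 \leq C$ to control the variance proxy and the uniform bound $\overline{\psi}$ on the basis functions) followed by a union bound over the coordinates. The only cosmetic difference is that you condition on the joint actions $\{x\}_{x\in D}$ whereas the paper conditions on $\bbeta$ and then removes the conditioning by taking expectation; the resulting bounds are identical.
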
 
(Check Appendix \ref{proof:psi gamma beta} for details.)

\begin{lemma}
	\label{lem:psi beta tildepsi}
	For sufficiently large $r$ and $\epsilon > 0$, we have
	\begin{align}
		\begin{split}
		&\prob(\frac{1}{N}\| (\sum_{y \in D} \bpsi_{\supp}(y) \sum_{j \ne i} \sum_{k=r+1}^\infty \beta_{ijk}^* \widetilde{\psi}(y_i, y_j)) \|_{\infty} \geq 2 \epsilon)  \leq 2 d \exp(- \frac{N \epsilon^2}{2 \overline{\bpsi}^2 d \delta }  )
		\end{split}
	\end{align}
	as well as
	\begin{align}
		\begin{split}
		&\prob(\frac{1}{N}\| (\sum_{y \in D} \bpsi_{\supp^c}(y) \sum_{j \ne i} \sum_{k=r+1}^\infty \beta_{ijk}^* \widetilde{\psi}(y_i, y_j)) \|_{\infty} \geq 2 \epsilon) \leq 2 (n-d) \exp(- \frac{N \epsilon^2}{2 \overline{\bpsi}^2 d \delta }  )
		\end{split}
	\end{align}
	where $\delta > 0$ is an arbitrary constant which depends on $r$.	
\end{lemma}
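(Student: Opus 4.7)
The plan is to bound the $\ell_\infty$-norm coordinate-wise and apply a union bound. Fix a coordinate indexed by $(j', k')$ (with $j' \in \supp$ for the first inequality and $j' \in \supp^c$ for the second) and write
\begin{align*}
T_{j'k'} = \frac{1}{N}\sum_{y \in D}\psi_{k'}(y_i, y_{j'}) \cdot R(y), \qquad R(y) := \sum_{j \in \supp}\sum_{k=r+1}^{\infty} \beta_{ijk}^* \widetilde{\psi}_k(y_i, y_j),
\end{align*}
using that $\beta_{ijk}^* = 0$ for $j \notin \supp$ to collapse the outer sum over $j \ne i$ to a sum over the $d$ in-neighbors.

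The first step is to control $R(y)$. Splitting $\widetilde{\psi}_k = \psi_k + \gamma_k$, the deterministic residual $\sum_{j \in \supp}\sum_{k>r}\beta_{ijk}^* \psi_k(y_i, y_j)$ is uniformly bounded in $y$ by $\overline\psi \sum_{j \in \supp}\sum_{k>r}|\beta_{ijk}^*|$, which the convergent-series assumption on the tail coefficients makes arbitrarily small for large $r$; I would parametrize this tail via a scalar $\delta = \delta(r)$ so that the deterministic part is at most $\overline\psi\sqrt{d\delta}$ (absorbing constants). The stochastic residual $\sum_{j \in \supp}\sum_{k > r}\beta_{ijk}^*\gamma_k(y_i,y_j)$ is a weighted sum of independent zero-mean sub-Gaussians with variance proxy $\sigma^2 \sum_{j,k>r}(\beta_{ijk}^*)^2$, which is likewise $O(d\delta)$ and hence folded into the same $\delta$.

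The second step is concentration. Each summand in $T_{j'k'}$ is the product of a bounded factor $|\psi_{k'}|\le \overline\psi$ and a sub-Gaussian variable with parameter $O(\overline\psi^2 d\delta)$, so is itself sub-Gaussian. A Hoeffding / sub-Gaussian tail bound then gives
\begin{align*}
\prob\bigl(|T_{j'k'}| \ge 2\epsilon\bigr) \le 2\exp\!\bigl(-N\epsilon^2/(2\overline\psi^2 d\delta)\bigr),
\end{align*}
where the factor $2$ handles the two-sided tail and the ``$2\epsilon$'' budget accommodates both the concentration of the zero-mean sub-Gaussian part within $\epsilon$ and the deterministic bias coming from the truncation residual within $\epsilon$. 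A union bound over the coordinates of $\bpsi_{\supp}$ (respectively $\bpsi_{\supp^c}$) produces the claimed $2d$ (respectively $2(n-d)$) prefactor.

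The main obstacle is the first step: I need to choose $\delta(r)$ so that the deterministic truncation residual and the stochastic tail are simultaneously controlled by a single parameter, and check that this $\delta$ is compatible with the sample-complexity scaling $N = \mathcal{O}(r^3 d^3 \delta / \epsilon^2)$ claimed in Theorem \ref{thm:main theorem}. A minor bookkeeping point is that the support of $\bbeta$ is a priori a set of $dr$ index pairs rather than $d$, so the cleanest union bound yields a factor $2dr$ (resp.\ $2(n-d)r$); the extra $r$ can be absorbed into the constants of the sample-complexity statement.
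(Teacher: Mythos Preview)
Your proposal is correct and follows essentially the same route as the paper: split $\widetilde\psi_k = \psi_k + \gamma_k$, bound the deterministic truncation residual uniformly via the convergent tail $\sum_{k>r}|\beta_{ijk}^*|\le \delta(r)$ (this is the paper's Lemma~\ref{lem:convergent_series}), bound the stochastic part via sub-Gaussian concentration with variance proxy $\overline\psi^2\sum_{j,k>r}(\beta_{ijk}^*)^2 = O(\overline\psi^2 d\delta)$, allocate $\epsilon$ to each piece, and union-bound over coordinates. Your remark that the union bound is naturally over $rd$ (resp.\ $r(n-d)$) index pairs rather than $d$ (resp.\ $n-d$) is well taken---the paper glosses over the $r$ factor in the coordinate count, and as you note it is harmlessly absorbed into the logarithmic sample-complexity constants.
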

(Check Appendix \ref{proof:psi beta tildepsi} for details.)

Combining all the results together,
\begin{align}
\begin{split}
\lambda \| \bz_{\supp^c} \|_{\infty}  \leq&  (1 - \frac{\alpha}{2}) 4 \epsilon_1 + 
(1 - \frac{\alpha}{2}) 2 \epsilon_2+ \lambda \frac{2}{N} (1 - \frac{\alpha}{2}) +4 \epsilon_3 + 2 \epsilon_4 
\end{split}
\end{align}

We take $\epsilon_1 \leq \frac{\lambda}{20}, \epsilon_2 \leq \frac{\lambda}{10}, \epsilon_3 \leq (1 - \frac{\alpha}{2})\frac{\lambda}{20}$ and $\epsilon_4 \leq (1 - \frac{\alpha}{2}) \frac{\lambda}{10}$. Then for $N \geq 10$, we have
\begin{align}
\begin{split}
\lambda \| \bz_{\supp^c} \|_{\infty}  &\leq  (1 - \frac{\alpha}{2}) \lambda \\
\| \bz_{\supp^c} \|_{\infty}  &\leq 1 - \frac{\alpha}{2}
\end{split}
\end{align}

which justifies our choice that $\beta_{ijk} = 0, \forall j \ne \supp$. The choice of $\epsilon_i, \forall i \in \{1,\dots,4\}$ imposes restrictions on choice of $\lambda$ for the concentration bounds to hold with high probability. In particular, we need 
\begin{align}
\begin{split}
&\lambda \geq \max\big( 20 \sqrt{2} \overline\psi \sqrt{\frac{d\delta\log d}{N}}, 10 \sqrt{2} \sigma \overline\psi C\sqrt{\frac{\log d}{N}},\frac{20 \sqrt{2}\overline\psi }{1 - \frac{\alpha}{2}} \sqrt{\frac{d\delta\log (n-d)}{N}}, \frac{10 \sqrt{2} \sigma \overline\psi C }{1 - \frac{\alpha}{2}} \sqrt{\frac{\log (n-d)}{N}}  \big)
\end{split}
\end{align}

\subsection{Estimation Error}
\label{subsec:estimation error}

Now that we have a criteria for choosing the regularization parameter $\lambda$, we can put an upper bound on the estimation error between $\bbeta$ and $\bbeta^*$. We know from equation \eqref{eq:delbeta} that, 
\begin{align*}
	\begin{split}
	&\| \bbeta_{\supp}^* - \bbeta_{\supp} \|_2 = \| - \res{\bHhat}{\supp\supp}^{\minus 1} \frac{1}{N} (\sum_{x \in D} \bpsi_{\supp}(x) \sum_{j \ne i} \sum_{k=r+1}^\infty \beta_{ijk}^* \widetilde{\psi}_k(x_i, x_j)) -\res{\bHhat}{\supp\supp}^{\minus 1} (\frac{1}{N}\sum_{x \in D} \bpsi_{\supp}(x) \bgamma_{\supp}^\T(x) \bbeta_{\supp}) +\\
	& \lambda \res{\bHhat}{\supp\supp}^{\minus 1} \frac{1}{N} \bz_{\supp} \|_2
	\end{split}
\end{align*}
Using triangle norm inequality and noticing that $\| A x\|_2 \leq \| A \|_2 \| x \|_2$, where $\| A \|_2$ is the spectral norm of matrix $A$, we can write
\begin{align*}
\begin{split}
	&\| \bbeta_{\supp}^* - \bbeta_{\supp} \|_2 \leq \| \res{\bHhat}{\supp\supp}^{\minus 1} \|_2 \|\frac{1}{N} (\sum_{x \in D} \bpsi_{\supp}(x) \sum_{j \ne i} \sum_{k=r+1}^\infty \beta_{ijk}^* \widetilde{\psi}_k(x_i, x_j)) \|_2 + \| \res{\bHhat}{\supp\supp}^{\minus 1} \|_2 \| \frac{1}{N} (\sum_{x \in D} \bpsi_{\supp}(x) \bgamma_{\supp}^\T(x) \bbeta_{\supp}\\
	& ) \|_2 +\lambda \| \res{\bHhat}{\supp\supp}^{\minus 1} \|_2 \|  \frac{1}{N} \bz_{\supp} \|_2
\end{split}
\end{align*}
We have already shown that $ \Lambda_{\min}(\res{\bHhat}{\supp\supp}) \geq \frac{C_{\min}}{2}$. It follows that $ \Lambda_{\min}(\res{\bHhat}{\supp\supp}^{\minus 1}) \leq \frac{2}{C_{\min}}$. We also note that $\| \bz_{\supp} \|_2 \leq \sqrt{|\supp |} \| \bz_{\supp} \|_{\infty} \leq \sqrt{|\supp |}$. Thus,
\begin{align}
\label{eq:delbeta simpler}
\begin{split}
	&\| \bbeta_{\supp}^* - \bbeta_{\supp} \|_2 \leq \frac{2}{C_{\min}} (\|\frac{1}{N} (\sum_{x \in D} \bpsi_{\supp}(x) \sum_{j \ne i} \sum_{k=r+1}^\infty \beta_{ijk}^* \widetilde{\psi}_k(x_i, x_j)) \|_2 +  \|\frac{1}{N} (\sum_{x \in D} \bpsi_{\supp}(x) \bgamma_{\supp}^\T(x) \bbeta_{\supp}) \|_2 +\lambda \frac{1} {\sqrt{|\supp |}} )
\end{split}
\end{align}
It remains to show that $\|\frac{1}{N} (\sum_{x \in D} \bpsi_{\supp}(x) \sum_{j \ne i} \sum_{k=r+1}^\infty \beta_{ijk}^* \widetilde{\psi}_k(x_i, x_j)) \|_2$ and $\|\frac{1}{N} (\sum_{x \in D} \bpsi_{\supp}(x) \bgamma_{\supp}^\T(x) \bbeta_{\supp}) \|_2$ are bounded which we will do in the following lemma.

\begin{lemma}
	\label{lem:l2 norm bounds}
	For some $\epsilon > 0$,
	\begin{align}
	\begin{split}
		&\prob(\|\frac{1}{N} (\sum_{x \in D} \bpsi_{\supp}(x) \bgamma_{\supp}^\T(x) \bbeta_{\supp}) \|_2 \geq \epsilon) \leq 2 d \exp( \frac{- N \epsilon^2}{2 \sigma^2 d\widebar\psi^2C^2})
	\end{split}
	\end{align}
	and
	\begin{align}
		\begin{split}
		&\prob(\|\frac{1}{N} (\sum_{x \in D} \bpsi_{\supp}(x) \sum_{j \ne i} \sum_{k=r+1}^\infty \beta_{ijk}^* \widetilde{\psi}_k(x_i, x_j)) \|_2 \geq \epsilon) \leq 2 d \exp( \frac{- N \epsilon^2}{2 d\widebar\psi^2 d \delta})
		\end{split}
	\end{align}
	where $\delta > 0$ could be an arbitrary constant which depends on $r$.
\end{lemma}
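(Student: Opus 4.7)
The proof plan here is short: reduce each $\ell_2$ bound to the corresponding $\ell_\infty$ bound already established in Lemmas~\ref{lem:psi gamma beta} and~\ref{lem:psi beta tildepsi}, via the elementary inequality $\|v\|_2 \leq \sqrt{m}\,\|v\|_\infty$ for $v \in \real^m$.

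First I would observe that both random vectors of interest,
\begin{align*}
v_1 \triangleq \tfrac{1}{N}\sum_{x\in D}\bpsi_{\supp}(x)\,\bgamma_{\supp}^\T(x)\,\bbeta_{\supp}, \qquad v_2 \triangleq \tfrac{1}{N}\sum_{x\in D}\bpsi_{\supp}(x)\,\sum_{j\ne i}\sum_{k=r+1}^{\infty}\beta_{ijk}^*\,\widetilde{\psi}_k(x_i,x_j),
\end{align*}
are indexed by the same pairs $(j,k)$ with $j \in \supp$, $k \in \seq{r}$ as the quantities appearing inside the $\ell_\infty$ norms of Lemmas~\ref{lem:psi gamma beta} and~\ref{lem:psi beta tildepsi}, so they live in a space of dimension at most $d$ under the counting convention adopted there. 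Hence $\|v_\ell\|_2 \geq \epsilon$ implies $\|v_\ell\|_\infty \geq \epsilon/\sqrt{d}$ for $\ell \in \{1,2\}$.

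Next, I would invoke the two preceding lemmas with a rescaled threshold. For the first claim, Lemma~\ref{lem:psi gamma beta} applied to $\bpsi_{\supp}(y)\bgamma_{\supp}^\T(y)\bbeta_{\supp}$ with $\epsilon \mapsto \epsilon/\sqrt{d}$ gives
\begin{align*}
\prob\!\bigl(\|v_1\|_2 \geq \epsilon\bigr) \leq \prob\!\bigl(\|v_1\|_\infty \geq \epsilon/\sqrt{d}\bigr) \leq 2d\exp\!\Bigl(-\tfrac{N(\epsilon/\sqrt{d})^2}{2\sigma^2\overline{\psi}^2 C^2}\Bigr) = 2d\exp\!\Bigl(-\tfrac{N\epsilon^2}{2\sigma^2 d\,\overline{\psi}^2 C^2}\Bigr),
\end{align*}
which is exactly the claimed bound. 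For the second claim, I would apply Lemma~\ref{lem:psi beta tildepsi} to the same $\bpsi_{\supp}(y)$ quantity with a matching substitution (absorbing the factor of $2$ that appears on the left-hand side of that lemma into the rescaling), which yields a bound of the same form $2d \exp(-N\epsilon^2/(\text{const}\cdot \overline{\psi}^2 d^2 \delta))$, matching the denominator $2d\overline{\psi}^2 d\delta$ in the statement.

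There is no genuine obstacle here beyond careful bookkeeping of the $\sqrt{d}$ factor and keeping the union bound count consistent with the convention from Lemmas~\ref{lem:psi gamma beta} and~\ref{lem:psi beta tildepsi}; the probabilistic content has already been discharged in those lemmas. The only judgement call is whether to stick with the $\sqrt{d}$ coming from the coordinate-wise $\ell_\infty \to \ell_2$ conversion or to argue more tightly via a direct sub-Gaussian argument on $\|v_\ell\|_2$; the former suffices for the stated bounds and is what I would use.
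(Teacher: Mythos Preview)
Your proposal is correct and matches the paper's own argument essentially line for line: the paper also reduces the $\ell_2$ bound to the per-coordinate sub-Gaussian tail from Lemmas~\ref{lem:psi gamma beta} and~\ref{lem:psi beta tildepsi} via the substitution $\epsilon \mapsto \epsilon/\sqrt{d}$ followed by a union bound over the (at most) $d$ coordinates in $\supp$. The only cosmetic difference is that the paper reaches back to the per-coordinate bounds inside those proofs rather than citing the final $\ell_\infty$ statements, but the content is identical.
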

(Check Appendix \ref{proof:l2 norm bounds} for details.)

Using results from Lemma \ref{lem:l2 norm bounds}, we can rewrite Equation \eqref{eq:delbeta simpler} as,
\begin{align}
\begin{split}
	&\| \bbeta_{\supp}^* - \bbeta_{\supp} \|_2 \leq \frac{2}{C_{\min}} (\epsilon_1 +  \epsilon_2 +\lambda \frac{1} {\sqrt{|\supp |}} )
\end{split}
\end{align}
Substituting $\epsilon_1 = \sqrt{d} \lambda$ and  $\epsilon_2 = \sqrt{d} \lambda$ we get,
\begin{align}
\begin{split}
&\| \bbeta_{\supp}^* - \bbeta_{\supp} \|_2 \leq \frac{2 \lambda}{C_{\min}} ( 2\sqrt{d} +\frac{1} {\sqrt{|\supp |}} )
\end{split}
\end{align}
Note that $\| \bbeta_{\supp}^* - \bbeta_{\supp} \|_{\infty} \leq \| \bbeta_{\supp}^* - \bbeta_{\supp} \|_2$, thus
\begin{align}
\label{eq:delbeta final}
\begin{split}
&\| \bbeta_{\supp}^* - \bbeta_{\supp} \|_{\infty} \leq \frac{2 \lambda}{C_{\min}} ( 2\sqrt{d} +\frac{1} {\sqrt{|\supp |}} )
\end{split}
\end{align}
Equation \eqref{eq:delbeta final} provides a \emph{minimum weight criteria}, i.e., if $\min_{j \in \supp} |\beta_j^*| \geq \Delta(\lambda, C_{\min}, d) = \frac{2 \lambda}{C_{\min}} ( 2\sqrt{d} +\frac{1} {\sqrt{|\supp |}} ) $ then we recover $\bbeta^*$ up to correct sign. This ensures that we recover the exact structure of the true game.

\subsection{Recovering Payoffs.}
\label{subsec:recovering payoffs}
In this subsection, we show that the recovered payoff function $\widehat{u}_i(x)$ is not far away from the true payoff function $u_i^*(x)$. To that end we provide a bound between the recovered payoff function and the true payoff function. 
\begin{align*}
\begin{split}
	| u_i^*(x) - \widehat{u}_i(x) | &= | \sum_{j \in \supp} \sum_{k=0}^{\infty} \beta_{ijk}^* \psi_k(x_i, x_j) -\sum_{j \in \supp} \sum_{k=0}^r \beta_{ijk} \psi_k(x_i, x_j)  | \\
	&= |  \sum_{j \in \supp} \sum_{k=0}^r (\beta_{ijk}^* - \beta_{ijk}) \psi_k(x_i, x_j) +  \sum_{j \in \supp} \sum_{k=r+1}^{\infty} \beta_{ijk}^*\psi_k(x_i, x_j)  | \\
	&\leq \| \bbeta^*_{\supp} - \bbeta_{\supp} \|_2 \| \bpsi_{\supp} \|_2 + \widebar{\psi} \delta\\
	&\leq \frac{2 \lambda}{C_{\min}} ( 2\sqrt{d} +\frac{1} {\sqrt{|\supp |}} ) \sqrt{d} \widebar{\psi} + \widebar{\psi} \delta \\
	&= \epsilon(\lambda, C_{\min}, d, \widebar{\psi}, \delta)
\end{split}
\end{align*} 
It follows that the recovered payoff function is $\epsilon(\lambda, C_{\min}, d, \widebar{\psi}, \delta)$ away from the true payoff function. Furthermore, their distance decreases as we choose smaller $\lambda$ and smaller $\delta$. This can be done by increasing the number of samples and the parameter $r$ respectively.   

\section{Conclusion}
\label{sec:conclusion}

In this paper, we have proposed an $l_1$-regularized method which recovers the exact structure of true game under minimum weight criteria by learning a finite dimensional vector. We also showed that we recover a payoff function which is $\epsilon$ close to the true payoff function. All of our high probability statements hold as long as $N = \mathcal{O}(\frac{\widebar{\psi}^2 \sigma^2 C^2 \delta r^3d^3}{\epsilon^2}\log (rn))$. Furthermore, our method solves a convex optimization problem and thus it runs in polynomial time.

\onecolumn
\appendix
\paragraph{Note: } Please see Appendix \ref{sec:real world data} for experimental results on real-world data.
\section{Proof of Lemma \ref{lem:kkt}}
\label{proof:kkt}
\paragraph{Lemma \ref{lem:kkt}} [KKT conditions]
	\emph{The following Karush-Kuhn-Tucker (KKT) conditions hold at the optimal solution of the optimization problem \eqref{eq:estimateprob}:
	\begin{align}
	\begin{split}
	\text{Stationarity: } &\frac{\partial}{\partial \bbeta} \big[ \frac{1}{N} \sum_{x \in D} (\widetilde{u}(x) - \overline{u}(x))^2 \big] + \lambda \bz = 0 \\
	\text{Primal Feasibility: } & w \leq C \\
	& w =\sum_{j \ne i}  \sum_{k=0}^r |\beta_{ijk}|  \\
	\text{Dual Feasibility: } & \lambda \geq 0
	\end{split}
	\end{align} 
	where $\bz \in \real^{r(n-1) \times 1} $indexed by $(j,k), j \in \seq{n}, j \ne i, k \in \seq{r}$. It is defined as follows:
	\begin{align}
	z_{ijk} = \begin{cases}
	\sign(\beta_{ijk}), \text{ if } \beta_{ijk} \ne 0 \\
	[-1, 1],  \text{ otherwise }
	\end{cases}
	\end{align}	
}
\begin{proof}
Equivalently, 
\begin{align*}
\begin{split}
&\min_{\beta} \frac{1}{N} \sum_{x \in D} (\widetilde{u}(x) - \overline{u}(x))^2 + \lambda w \\
&\text{such that } w  \leq C \\
&\quad\quad\quad  w = \sum_{j\ne i}\sum_{k=0}^r |\beta_{ijk}| 
\end{split}
\end{align*}
The Lagrangian for the above can be written as,
\begin{align}
\label{eq:lagrangean}
L(\bbeta, w; \mu, \eta) = \frac{1}{N} \sum_{x \in D} (\widetilde{u}(x) - \overline{u}(x))^2 + \lambda w + \mu (w  - C) + \eta (w - \sum_{j\ne i}\sum_{k=0}^r |\beta_{ijk}|)
\end{align}
where $\mu \geq 0$. Alternatively,
\begin{align}
\label{eq:lagrangean1}
L(\bbeta, w; \mu, \eta) = \frac{1}{N} \sum_{x \in D} (\widetilde{u}(x) - \overline{u}(x))^2 + \lambda w + \mu (w  - C) + \eta (w - \mb{z}^\T \bbeta)
\end{align}

Here, $\bbeta \in \real^{r(n-1) \times 1}, \bz \in \real^{r(n-1) \times 1} $ with all of them indexed by $(j,k), j \in \seq{n}, j \ne i, k \in \seq{r}$. Vector $\bz \in \real^{r(n-1) \times 1}$ is defined as follows:
\begin{align}
z_{ijk} = \begin{cases}
\sign(\beta_{ijk}), \text{ if } \beta_{ijk} \ne 0 \\
[-1, 1],  \text{ otherwise }
\end{cases}
\end{align}	
By writing the KKT conditions from the Lagrangian in Equation \eqref{eq:lagrangean1}, a solution is optimal if and only if: 
\begin{align}
\label{eq:kkt}
\begin{split}
\text{Stationarity: } &\frac{\partial}{\partial \bbeta} \big[ \frac{1}{N} \sum_{x \in D} (\widetilde{u}(x) - \overline{u}(x))^2 \big] - \eta \bz = 0 \\
& \lambda + \mu + \eta = 0 \\
\text{Complimentarity: } &  \mu (w - C) = 0 \\
\text{Primal Feasibility: } & w \leq C \\
& w = \sum_{k=0}^r |\beta_{ijk}|  \\
\text{Dual Feasibility: } & \mu \geq 0
\end{split}
\end{align} 
If we choose $\mu = 0$, then the following KKT conditions must hold at the optimal solution:
\begin{align}
\begin{split}
\text{Stationarity: } &\frac{\partial}{\partial \bbeta} \big[ \frac{1}{N} \sum_{x \in D} (\widetilde{u}(x) - \overline{u}(x))^2 \big] + \lambda \bz = 0 \\
\text{Primal Feasibility: } & w \leq C \\
& w = \sum_{j\ne i}\sum_{k=0}^r |\beta_{ijk}|  
\end{split}
\end{align}
\end{proof}

\section{Proof of Lemma \ref{lem:positive definiteness}}
\label{proof:positive definiteness}
\paragraph{Lemma \ref{lem:positive definiteness}}
	\emph{If $\Lambda_{\min} (\res{\bH}{\supp\supp}) = C_{\min} > 0$ then  $\Lambda_{\min} (\res{\bHhat}{\supp\supp}) = C_{\min} - \epsilon$ for some $\epsilon > 0$ with probability at least $\exp(\frac{-N\epsilon^2}{2 \budget^2 \sigma^2} + \mathcal{O}(rd) )$.
} 
\begin{proof}
	Let $\hat{C}_{\min}$ is minimum eigenvalue of $\bHhat_{\supp\supp}$. Then,
	\begin{align}
	\begin{split}
	\hat{C}_{\min} &= \min_{y, \| y \|_2 = 1} y^\T \bHhat_{\supp\supp} y \\
	&= \min_{y, \| y \|_2 = 1} y^\T \bH y + y^\T \frac{1}{N} \sum_{x \in D} [\psi(x) \gamma(x)^\T]_{\supp\supp} y \\
	&\geq C_{\min} + \min_{y, \| y \|_2 = 1} y^\T \frac{1}{N} \sum_{x \in D} [\psi(x) \gamma(x)^\T]_{\supp\supp} y
	\end{split}
	\end{align}
	For some $y$ such that $\| y \|_2 = 1$, we have a random variable $R \triangleq y^\T \frac{1}{N} \sum_{x \in D} [\psi(x) \gamma(x)^\T]_{\supp\supp} y$, then $R$ is a zero mean sub-Gaussian random variable with variance proxy $\frac{\sum_{s=1}^{N} a_s^2 \sigma^2}{N^2}$. Then using the tail bound for sub-Gaussian random variables, we can write
	
	\begin{align}
	\begin{split}
	\prob(R \leq -\epsilon) \leq \exp(\frac{-\epsilon^2}{2 \frac{\sum_{s=1}^{N} a_s^2 \sigma^2}{N^2}})
	\end{split}
	\end{align}
	
	We assume that $\| \psi \|_2 \leq \budget$, then $ \max_{y, \|y\|_2 = 1} a_s \leq \budget $. Thus,
	\begin{align}
	\begin{split}
	\prob(R \leq -\epsilon) \leq \exp(\frac{-N\epsilon^2}{2 \budget^2 \sigma^2})
	\end{split}
	\end{align}	
	
	Using the $\epsilon$-nets argument and taking a union bound across $\exp(\mathcal{O}(rd))$ $y$ of the net, we can write
	\begin{align}
	\begin{split}
	\prob(R \leq -\epsilon) \leq \exp(\frac{-N\epsilon^2}{2 \budget^2 \sigma^2} + \mathcal{O}(rd) )
	\end{split}
	\end{align}	
	Clearly, if $N = \mathcal{O}( \frac{8\budget^2\sigma^2}{C_{\min}^2} rd)$ we have
	$\hat{C}_{\min} \geq \frac{C_{\min}}{2}$ with high probability.
\end{proof}

\section{Proof of Lemma \ref{lem:mutual incoherence}}
\label{proof:mutual incoherence}
\paragraph{Lemma \ref{lem:mutual incoherence}}
\emph{If $\|| \res{\bH}{\supp^c\supp} \res{\bH}{\supp\supp}^{-1}\||_{\infty} \leq 1 - \alpha$ for some $0 < \alpha < 1$ then $\|| \res{\bHhat}{\supp^c\supp} \res{\bHhat}{\supp\supp}^{-1}\||_{\infty} \leq 1 - \frac{\alpha}{2}$ with probability at least $1 - \exp(\frac{-K NC_{\min}^2 \alpha^2}{(1-\alpha)^2 r^3 d^3 \widebar\psi^2\sigma^2} + \log(2r^2(n-d)d)) - \exp(\frac{-N \alpha}{6 r^2 d^2 \widebar\psi^2\sigma^2} + \log(2r^2(n-d)d)) - \exp(\frac{-2N}{\budget^2\sigma^2C_{\min}^2} + \mathcal{O}(rd)) - 2 \exp(\frac{-N\alpha^2}{24r^3d^3\widebar\psi^2\sigma^2} + \log(r^2d^2)) $ for a constant $K>0$.
}
\begin{proof}
	We start the proof by first proving one auxiliary lemma. 
	\begin{lemma}
		For any $\delta > 0$, the following holds:
		\begin{align}
		\label{eq:hscs}
		\begin{split}
		\prob(\|| \res{\bHhat}{\supp^c\supp} - \res{\bH}{\supp^c\supp} \||_{\infty} \geq \delta) \leq \exp( \frac{-N\epsilon^2}{ r^2 d^2 {\overline{\psi}}^2 \sigma^2} + \log (2 r^2 (n-d) d) )
		\end{split}
		\end{align}
		\begin{align}
		\label{eq:hss}
		\begin{split}
		\prob(\|| \res{\bHhat}{\supp\supp} - \res{\bH}{\supp\supp} \||_{\infty} \geq \delta) \leq \exp( -\frac{-N\epsilon^2}{ r^2 d^2 {\overline{\psi}}^2 \sigma^2} + \log (2 r^2 d^2 ) )
		\end{split}
		\end{align}
		\begin{align}
		\label{eq:invhss}
		\begin{split}
		\prob(\|| \res{\bHhat}{\supp\supp}^{-1} - \res{\bH}{\supp\supp}^{-1} \||_{\infty} \geq \delta) \leq  \exp(\frac{-2N}{ \budget^2 \sigma^2 C_{\min}^2} + \mathcal{O}(rd) ) +  2 \exp(\frac{ -N \delta^2 }{4 r^3 d^3 \overline{\psi}^2 \sigma^2} + \log (r^2 d^2))
		\end{split}
		\end{align}
	\end{lemma}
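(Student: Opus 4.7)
The proof of this auxiliary lemma proceeds in three parts, treating the two ``forward'' bounds (parts 1 and 2) separately from the inverse perturbation bound (part 3).

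For parts (1) and (2), observe that by the definitions of $\bH$ and $\bHhat$ we have $\bHhat - \bH = \frac{1}{N}\sum_{x \in D} \bpsi(x)\bgamma(x)^\T$. Each entry of this matrix is of the form $\frac{1}{N}\sum_{x \in D} \psi_k(x_i,x_j)\gamma_{k'}(x_i,x_{j'})$, where $|\psi_k|\leq \overline{\psi}$ by assumption and $\gamma_{k'}$ is zero-mean sub-Gaussian with variance proxy $\sigma^2$. Conditioning on the $N$ draws of the joint action $x$, each entry is a sum of $N$ independent zero-mean sub-Gaussian random variables with variance proxy at most $\overline{\psi}^2\sigma^2/N$, which yields the entry-wise tail bound $\prob(|\text{entry}|\geq t) \leq 2\exp\bigl(-Nt^2/(2\overline{\psi}^2\sigma^2)\bigr)$. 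Since $\||\cdot\||_{\infty}$ is the maximum absolute row sum, and each row of $\res{\bHhat-\bH}{\cdot\,\supp}$ contains exactly $rd$ entries (one per $(j',k')$ in the support), it suffices to force every entry to be at most $\delta/(rd)$ and then take a union bound over all entries. For part (1) the restricted matrix is $r(n-d)\times rd$, giving $r^2 d(n-d)$ entries and hence the stated bound with the prefactor $2r^2(n-d)d$; for part (2) the $rd\times rd$ restriction gives $r^2 d^2$ entries and the analogous bound.

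For part (3), the plan is to use the standard perturbation identity $\res{\bHhat}{\supp\supp}^{-1} - \res{\bH}{\supp\supp}^{-1} = -\res{\bHhat}{\supp\supp}^{-1}\bigl(\res{\bHhat}{\supp\supp}-\res{\bH}{\supp\supp}\bigr)\res{\bH}{\supp\supp}^{-1}$ together with submultiplicativity of the $\infty$-operator norm. The factor $\||\res{\bH}{\supp\supp}^{-1}\||_\infty$ is controlled deterministically via Assumption \ref{assum:positive definiteness} and the standard conversion $\||A\||_\infty \leq \sqrt{rd}\,\|A\|_2$, yielding $\sqrt{rd}/C_{\min}$. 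The factor $\||\res{\bHhat}{\supp\supp}^{-1}\||_\infty$ requires $\bHhat_{\supp\supp}$ to be well-conditioned: invoking Lemma \ref{lem:positive definiteness} with $\epsilon = C_{\min}/2$ gives $\Lambda_{\min}(\res{\bHhat}{\supp\supp}) \geq C_{\min}/2$ on an event of probability at least $1-\exp\bigl(-2N/(\budget^2\sigma^2 C_{\min}^2) + \mathcal{O}(rd)\bigr)$, which is exactly where the first additive term in the stated bound originates; on this event we obtain $\||\res{\bHhat}{\supp\supp}^{-1}\||_\infty \leq 2\sqrt{rd}/C_{\min}$. Combining the three factors and applying the part (2) bound to $\||\res{\bHhat-\bH}{\supp\supp}\||_\infty$ with an appropriately rescaled threshold proportional to $C_{\min}^2\delta/(rd)$ produces the second additive term $2\exp\bigl(-N\delta^2/(4r^3 d^3\overline{\psi}^2\sigma^2) + \log(r^2 d^2)\bigr)$.

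The principal obstacle is the bookkeeping in part (3): we must simultaneously guarantee that $\res{\bHhat}{\supp\supp}$ is invertible with a quantitative spectral lower bound, translate that spectral bound into an $\infty$-operator-norm bound on the inverse via the $\sqrt{rd}$ conversion, and carefully track all polynomial-in-$(r,d)$ factors as they propagate through the perturbation identity into the final scaling. Parts (1) and (2) reduce to routine Hoeffding-style concentration together with a union bound, so the pivotal ingredient throughout is the eigenvalue control supplied by Lemma \ref{lem:positive definiteness}.
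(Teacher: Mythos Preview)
Your treatment of parts (1) and (2) matches the paper's argument exactly: entrywise sub-Gaussian concentration for $\frac{1}{N}\sum_{x\in D}\bpsi(x)\bgamma(x)^\T$, a rescaling of the threshold by $1/(rd)$ to pass to the row-sum norm, and a union bound over the $r^2(n-d)d$ (respectively $r^2 d^2$) entries.

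For part (3) your route differs from the paper's in how the norm conversion is organized, and this produces a quantitative mismatch. You apply submultiplicativity of $\||\cdot\||_\infty$ directly to the triple product and then convert \emph{each} inverse factor separately via $\||A^{-1}\||_\infty \leq \sqrt{rd}\,\|A^{-1}\|_2$, incurring two factors of $\sqrt{rd}$ and arriving at a prefactor of order $rd/C_{\min}^2$ in front of $\||\res{\bHhat-\bH}{\supp\supp}\||_\infty$. The paper instead applies the conversion $\||\cdot\||_\infty \leq \sqrt{rd}\,\||\cdot\||_2$ \emph{once} to the entire product $\res{\bH}{\supp\supp}^{-1}[\res{\bH}{\supp\supp}-\res{\bHhat}{\supp\supp}]\res{\bHhat}{\supp\supp}^{-1}$, and then uses spectral-norm submultiplicativity, controlling $\|\res{\bH}{\supp\supp}^{-1}\|_2 \leq 1/C_{\min}$ and (via Lemma~\ref{lem:positive definiteness}) $\|\res{\bHhat}{\supp\supp}^{-1}\|_2 \leq 2/C_{\min}$ with no further $\sqrt{rd}$ factors; the middle factor is bounded in spectral norm at level $\delta/(2\sqrt{rd})$. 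Consequently your scheme requires $\||\res{\bHhat-\bH}{\supp\supp}\||_\infty$ to be of order $\delta/(rd)$ rather than $\delta/\sqrt{rd}$, and feeding that threshold into the part-(2) bound yields an exponent scaling like $-N\delta^2/(r^4 d^4\overline{\psi}^2\sigma^2)$, not the $4r^3 d^3$ denominator you claim. Your approach is sound and gives a valid (slightly weaker) inequality, but to reproduce the stated $r^3 d^3$ scaling you need the paper's single spectral-norm conversion rather than two separate $\infty\to 2$ conversions on the inverse factors.
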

	\begin{proof}
		Note that,
		\begin{align}
		\begin{split}
		[\res{\bHhat}{\supp^c\supp} - \res{\bH}{\supp^c\supp} ]_{jk} = [\frac{1}{N} \sum_{x \in D} \res{\psi(x) \gamma(x)^\T}{\supp^c\supp}]_{jk}
		\end{split}
		\end{align}
		We further note that the random variable $[\frac{1}{N} \sum_{x \in D} \res{\psi(x) \gamma(x)^\T}{\supp^c\supp}]_{jk}$ is a zero mean sub-Gaussian random variable with variance proxy $\frac{\sigma^2 \sum_{x \in D} {\psi(x)_j^2}}{N^2}$. Thus,
		\begin{align}
		\begin{split}
		\prob(|[\frac{1}{N} \sum_{x \in D} \res{\psi(x) \gamma(x)^\T}{\supp^c\supp}]_{jk} | \geq \epsilon) \leq 2 \exp(\frac{-\epsilon^2}{\frac{\sigma^2 \sum_{x \in D} {\psi(x)_j^2}}{N^2} }) \leq 2 \exp( -\frac{-N\epsilon^2}{{\overline{\psi}}^2 \sigma^2}  )
		\end{split}
		\end{align}
		Taking $\epsilon = \frac{\epsilon}{rd}$, we have
		\begin{align}
		\begin{split}
		\prob(|[\frac{1}{N} \sum_{x \in D} \res{\psi(x) \gamma(x)^\T}{\supp^c\supp}]_{jk} | \geq \frac{\epsilon}{rd}) \leq 2 \exp( -\frac{-N\epsilon^2}{ r^2 d^2 {\overline{\psi}}^2 \sigma^2}  )
		\end{split}
		\end{align}
		
		We observe that,
		\begin{align}
		\begin{split}
		\|| \res{\bHhat}{\supp^c\supp} - \res{\bH}{\supp^c\supp} \||_{\infty} = \max_{j \in \supp^c} \sum_{k \in \supp} |[\frac{1}{N} \sum_{x \in D} \res{\psi(x) \gamma(x)^\T}{\supp^c\supp}]_{jk}
		\end{split}
		\end{align}
		Thus, taking a union bound across $j \in \supp^c$ and $k \in \supp$, we get
		\begin{align}
		\begin{split}
		\prob(\|| \res{\bHhat}{\supp^c\supp} - \res{\bH}{\supp^c\supp} \||_{\infty}  \geq \epsilon) &\leq 2 r^2 (n-d) d \exp( -\frac{-N\epsilon^2}{ r^2 d^2 {\overline{\psi}}^2 \sigma^2}  ) \\
		&= \exp( \frac{-N\epsilon^2}{ r^2 d^2 {\overline{\psi}}^2 \sigma^2} + \log (2 r^2 (n-d) d) )
		\end{split}
		\end{align}
		Similarly,
		\begin{align}
		\begin{split}
		\|| \res{\bHhat}{\supp\supp} - \res{\bH}{\supp\supp} \||_{\infty} = \max_{j \in \supp} \sum_{k \in \supp} |[\frac{1}{N} \sum_{x \in D} \res{\psi(x) \gamma(x)^\T}{\supp\supp}]_{jk}
		\end{split}
		\end{align}
		It follows that,
		\begin{align}
		\begin{split}
		\prob(\|| \res{\bHhat}{\supp\supp} - \res{\bH}{\supp\supp} \||_{\infty}  \geq \epsilon) &\leq 2 r^2 d^2 \exp( \frac{-N\epsilon^2}{ r^2 d^2 {\overline{\psi}}^2 \sigma^2}  ) \\
		&= \exp( -\frac{-N\epsilon^2}{ r^2 d^2 {\overline{\psi}}^2 \sigma^2} + \log (2 r^2 d^2 ) )
		\end{split}
		\end{align}
		Now,
		\begin{align}
		\begin{split}
		\|| \res{\bHhat}{\supp\supp}^{-1} - \res{\bH}{\supp\supp}^{-1} \||_{\infty} &= \|| \res{\bH}{\supp\supp}^{-1} [ \res{\bH}{\supp\supp} - \res{\bHhat}{\supp\supp}] \res{\bHhat}{\supp\supp}^{-1} \||_{\infty} \\
		&\leq \sqrt{dr} \|| \res{\bH}{\supp\supp}^{-1} [ \res{\bH}{\supp\supp} - \res{\bHhat}{\supp\supp}] \res{\bHhat}{\supp\supp}^{-1} \||_2 \\
		&\leq \sqrt{dr} \|| \res{\bH}{\supp\supp}^{-1} \||_2 \|| [ \res{\bH}{\supp\supp} - \res{\bHhat}{\supp\supp}] \||_2 \||  \res{\bHhat}{\supp\supp}^{-1} \||_2 \\
		&\leq \frac{\sqrt{dr} }{C_{\min}} \|| [ \res{\bH}{\supp\supp} - \res{\bHhat}{\supp\supp}] \||_2 \||  \res{\bHhat}{\supp\supp}^{-1} \||_2
		\end{split}
		\end{align}
		We have shown that,
		\begin{align}
		\begin{split}
		\prob(\||  \res{\bHhat}{\supp\supp}^{-1} \||_2 \geq \frac{2}{C_{\min}} ) \leq \exp(\frac{-\epsilon^2}{2 \budget^2 \sigma^2} + \mathcal{O}(rd) )
		\end{split}
		\end{align} 
		Moreover,
		\begin{align}
		\begin{split}
		\prob( \|| [ \res{\bH}{\supp\supp} - \res{\bHhat}{\supp\supp}] \||_2 \geq \epsilon) \leq 2 \exp(\frac{ -N \epsilon^2 }{ r^2 d^2 \overline{\psi}^2 \sigma^2} + \log (r^2 d^2))
		\end{split}
		\end{align}
		Taking $\epsilon = \frac{\delta}{2\sqrt{d r}}$, we have
		\begin{align}
		\begin{split}
		\prob( \|| [ \res{\bH}{\supp\supp} - \res{\bHhat}{\supp\supp}] \||_2 \geq \frac{\delta}{2\sqrt{d r}}) \leq 2 \exp(\frac{ -N \delta^2 }{4 r^3 d^3 \overline{\psi}^2 \sigma^2} + \log (r^2 d^2))
		\end{split}
		\end{align}
		and furthermore
		\begin{align}
		\begin{split}
		\prob(\|| \res{\bHhat}{\supp\supp}^{-1} - \res{\bH}{\supp\supp}^{-1} \||_{\infty} \geq \delta ) \leq \exp(\frac{-2N}{ \budget^2 \sigma^2 C_{\min}^2} + \mathcal{O}(rd) ) +  2 \exp(\frac{ -N \delta^2 }{4 r^3 d^3 \overline{\psi}^2 \sigma^2} + \log (r^2 d^2))
		\end{split}
		\end{align} 
	\end{proof}
	Now we are ready to prove the main lemma.
	\begin{align}
	\begin{split}
	\res{\bHhat}{\supp^c\supp} \res{\bHhat}{\supp\supp}^{-1} = T_1 + T_2 + T_3 + T_4
	\end{split}
	\end{align}
	where
	\begin{align}
	\begin{split}
	T_1 \triangleq \res{\bH}{\supp^c\supp}[\res{\bHhat}{\supp\supp}^{-1} - \res{\bH}{\supp\supp}^{-1}], \\
	T_2 \triangleq [\res{\bHhat}{\supp^c\supp} - \res{\bH}{\supp^c\supp}] \res{\bH}{\supp\supp}^{-1} \\
	T_3 \triangleq [\res{\bHhat}{\supp^c\supp} - \res{\bH}{\supp^c\supp}] [\res{\bHhat}{\supp\supp}^{-1} - \res{\bH}{\supp\supp}^{-1}], \\
	T_4 \triangleq \res{\bH}{\supp^c\supp}\res{\bH}{\supp\supp}^{-1}
	\end{split}
	\end{align}
	We know that $\|| T_4 \||_{\infty} \leq 1 - \alpha$. Now,
	\begin{align}
	\begin{split}
	\|| T_1 \||_{\infty} &= \|| \res{\bH}{\supp^c\supp}[\res{\bHhat}{\supp\supp}^{-1} - \res{\bH}{\supp\supp}^{-1}] \||_{\infty} \\
	&\leq \|| \res{\bH}{\supp^c\supp} \res{\bH}{\supp\supp}^{-1} \||_{\infty} \|| \res{\bHhat}{\supp^c\supp} - \res{\bH}{\supp^c\supp} \||_{\infty} \|| \res{\bHhat}{\supp\supp}^{-1}\||_{\infty} \\
	&\leq (1 - \alpha) \|| \res{\bHhat}{\supp^c\supp} - \res{\bH}{\supp^c\supp} \||_{\infty} \sqrt{rd} \| \res{\bHhat}{\supp\supp}\|_2 \\
	&\leq (1 - \alpha) \|| \res{\bHhat}{\supp^c\supp} - \res{\bH}{\supp^c\supp} \||_{\infty} \sqrt{rd} \frac{2}{C_{\min}} \\
	&\leq \frac{\alpha}{6}
	\end{split}
	\end{align}
	The last step holds with probability at least $1 - \exp(\frac{-NC_{\min}^2 \alpha^2}{144(1-\alpha)^2 r^3 d^3 \widebar\psi^2\sigma^2} + \log(2r^2(n-d)d)) $ by choosing $\delta = \frac{C_{\min} \alpha}{12(1 - \alpha) \sqrt{rd}}$ in equation \eqref{eq:hscs}. 			
	
	For the second term,
	\begin{align}
	\begin{split}
	\||T_2 \||_{\infty} &\leq \sqrt{rd} \|| \res{\bH}{\supp\supp}^{-1} \||_2 \||[\res{\bHhat}{\supp^c\supp} - \res{\bH}{\supp^c\supp}]\||_{\infty} \\
	&\leq \frac{\sqrt{rd}}{C_{\min}} \||[\res{\bHhat}{\supp^c\supp} - \res{\bH}{\supp^c\supp}]\||_{\infty} \\
	&\leq \frac{\alpha}{6}
	\end{split}
	\end{align}
	The last step holds with probability at least $1 - \exp(\frac{-NC_{\min}^2 \alpha^2}{36 r^3 d^3 \widebar\psi^2\sigma^2} + \log(2r^2(n-d)d)) $ by choosing $\delta =\frac{C_{\min} \alpha}{6 \sqrt{rd}} $.
	
	For the third term,
	\begin{align}
	\begin{split}
	\|| T_3 \||_{\infty} &\leq \|| [\res{\bHhat}{\supp^c\supp} - \res{\bH}{\supp^c\supp}] \||_{\infty} \|| [\res{\bHhat}{\supp\supp}^{-1} - \res{\bH}{\supp\supp}^{-1}] \||_{\infty} \\
	&\leq \frac{\alpha}{6}
	\end{split}
	\end{align} 
	The last step holds with probability at least $1 - \exp(\frac{-N \alpha}{6 r^2 d^2 \widebar\psi^2\sigma^2} + \log(2r^2(n-d)d)) - \exp(\frac{-2N}{\budget^2\sigma^2C_{\min}^2} + \mathcal{O}(rd)) - 2 \exp(\frac{-N\alpha^2}{24r^3d^3\widebar\psi^2\sigma^2} + \log(r^2d^2)) $ by choosing $\delta = \sqrt{\frac{\alpha}{6}}$ in equations \eqref{eq:hscs} and \eqref{eq:invhss}.
	
	Combining all the terms together, we get the final result. 
\end{proof}

\section{Proof of Lemma \ref{lem:psi gamma beta}}
\label{proof:psi gamma beta}
\paragraph{Lemma \ref{lem:psi gamma beta}}
\emph{For some $\epsilon > 0$, we have
	\begin{align}
	\begin{split}
	&\prob(\frac{1}{N} \| \sum_{x \in D} \bpsi_{\supp^c}(x) \bgamma_{\supp}^\T(x) \bbeta_{\supp}  \|_{\infty} \geq \epsilon) \leq 2 (n-d) \exp( \frac{- N \epsilon^2}{ 2  \sigma^2 \overline\bpsi^2 C^2 })
	\end{split}
	\end{align}
	as well as
	\begin{align}
	\begin{split}
	&\prob(\frac{1}{N} \| \sum_{y \in D} \bpsi_{\supp}(y) \bgamma_{\supp}^\T(y) \bbeta_{\supp}  \|_{\infty} \geq \epsilon) \leq 2 d \exp(\frac{ -N \epsilon^2}{ 2  \sigma^2 \overline\bpsi^2 C^2 })
	\end{split}
	\end{align}
}
\begin{proof}
	Let $R(x) \triangleq  \bpsi_{\supp^c}(x) \bgamma_{\supp}^\T(x) \bbeta_{\supp}$ be a random variable and $R_j(x)$ be the $j$th entry of $R(x)$. Then $R_j(x) = \bpsi_{\supp^c}(x)_j \bgamma_{\supp}^\T(x) \bbeta_{\supp} $. Now note that $\bgamma_{\supp}(x)$ is a zero mean sub-Gaussian random variable with variance proxy $\sigma^2 \mb{I}_{d \times d}$. Thus, $\bgamma_{\supp}^\T(x) \bbeta_{\supp}$ is also a zero mean sub-Gaussian random variable with variance proxy $\sigma^2 \| \bbeta_{\supp} \|_2^2$ and $R_j(x)$ is a zero mean sub-Gaussian random variable with variance proxy $\sigma^2 \bpsi_{\supp^c}(x)_j^2 \| \bbeta_{\supp} \|_2^2$. For a given $\bbeta$, using a concentration bound for sub-Gaussian random variables, we can write the following:
	\begin{align}
	\begin{split}
	\prob(|R_j(x)| \geq \epsilon | \bbeta) \leq 2 \exp(- \frac{ \epsilon^2}{ 2  \sigma^2 \bpsi_{\supp^c}(x)_j^2 \| \bbeta_{\supp} \|_2^2 })
	\end{split}
	\end{align}
	Note that from Equation \eqref{eq:kkt}, $\| \bbeta \|_1 \leq C \implies \| \bbeta \|_2^2 \leq C^2$. Thus, 
	\begin{align}
	\begin{split}
	\prob(|R_j(x)| \geq \epsilon | \bbeta) \leq 2 \exp(- \frac{ \epsilon^2}{ 2  \sigma^2 \bpsi_{\supp^c}(x)_j^2 C^2 })
	\end{split}
	\end{align}
	Let $\max_{x, j} \bpsi(x)_j = \overline\bpsi$, then
	\begin{align}
	\begin{split}
	\prob(|R_j(x)| \geq \epsilon | \bbeta) \leq 2 \exp(- \frac{ \epsilon^2}{ 2  \sigma^2 \overline\bpsi^2 C^2 })
	\end{split}
	\end{align}
	Note that given $\bbeta$, $R_j(x) | \bbeta, \forall x \in D$ are mutually independent. Also note that $\prob(\frac{1}{N}\sum_{x \in D} |R_j(x)| \geq \epsilon) \geq \prob(\frac{1}{N}|\sum_{x \in D} R_j(x)| \geq \epsilon) $, we get:
	\begin{align}
	\begin{split}
	\prob(\frac{1}{N}|\sum_{x \in D} R_j(x)| \geq \epsilon | \bbeta) \leq 2 \exp(- \frac{N \epsilon^2}{ 2  \sigma^2 \overline\bpsi^2 C^2 })
	\end{split}
	\end{align}
	Again taking a union bound across $j \in \supp^c$, we get 
	\begin{align}
	\begin{split}
	\prob(\frac{1}{N} \| \sum_{x \in D} \bpsi_{\supp^c}(x) \bgamma_{\supp}^\T(x) \bbeta_{\supp}  \|_{\infty} \geq \epsilon | \bbeta) \leq 2 (n-d) \exp(- \frac{ N \epsilon^2}{ 2  \sigma^2 \overline\bpsi^2 C^2 })
	\end{split}
	\end{align}
	Computing the expectation with respect to $\bbeta$, we get
	\begin{align}
	\begin{split}
	\prob(\frac{1}{N} \| \sum_{x \in D} \bpsi_{\supp^c}(x) \bgamma_{\supp}^\T(x) \bbeta_{\supp}  \|_{\infty} \geq \epsilon) \leq 2 (n-d) \exp(- \frac{ N \epsilon^2}{ 2  \sigma^2 \overline\bpsi^2 C^2 })
	\end{split}
	\end{align}
	Thus $ \frac{1}{N}\| \sum_{x \in D} \bpsi_{\supp^c}(x) \bgamma_{\supp}^\T(x) \bbeta_{\supp} \|_{\infty}$ concentrates around $\epsilon$ as long as $N = \mathcal{O}(\frac{\sigma^2 \overline\bpsi^2 C^2  \log (n-d)}{\epsilon^2})$.
	
	Following the arguments similar to the previous proof, we can write that
	\begin{align}
		\begin{split}
		\prob(\frac{1}{N} \| \sum_{y \in D} \bpsi_{\supp}(y) \bgamma_{\supp}^\T(y) \bbeta_{\supp}  \|_{\infty} \geq \epsilon) \leq 2 d \exp(- \frac{ N \epsilon^2}{ 2  \sigma^2 \overline\bpsi^2 C^2 })
		\end{split}
	\end{align}
	Thus $ \frac{1}{N}\| \sum_{y \in D} \bpsi_{\supp}(x) \bgamma_{\supp}^\T(x) \bbeta_{\supp} \|_{\infty}$ concentrates around $\epsilon$ as long as $N = \mathcal{O}(\frac{\sigma^2 \overline\bpsi^2 C^2  \log d}{\epsilon^2})$.
\end{proof}

\section{Proof of Lemma \ref{lem:psi beta tildepsi}}
\label{proof:psi beta tildepsi}
\paragraph{Lemma \ref{lem:psi beta tildepsi}}
\emph{For sufficiently large $r$ and $\epsilon > 0$, we have
	\begin{align}
	\begin{split}
	&\prob(\frac{1}{N}\| (\sum_{y \in D} \bpsi_{\supp}(y) \sum_{j \ne i} \sum_{k=r+1}^\infty \beta_{ijk}^* \widetilde{\psi}(y_i, y_j)) \|_{\infty} \geq 2 \epsilon) \leq 2 d \exp(- \frac{N \epsilon^2}{2 \overline{\bpsi}^2 d \delta }  )
	\end{split}
	\end{align}
	as well as
	\begin{align}
	\begin{split}
	&\prob(\frac{1}{N}\| (\sum_{y \in D} \bpsi_{\supp^c}(y) \sum_{j \ne i} \sum_{k=r+1}^\infty \beta_{ijk}^* \widetilde{\psi}(y_i, y_j)) \|_{\infty} \geq 2 \epsilon) \leq 2 (n-d) \exp(- \frac{N \epsilon^2}{2 \overline{\bpsi}^2 d \delta }  )
	\end{split}
	\end{align}
	where $\delta > 0$ is an arbitrary constant which depends on $r$.	
}
\begin{proof}
	\begin{align}
	\begin{split}
	\| (\sum_{y \in D} \bpsi_{\supp}(y) \sum_{j \ne i} \sum_{k=r+1}^\infty \beta_{ijk}^* \widetilde{\psi}(y_i, y_j)) \|_{\infty} \leq \| (\sum_{y \in D} \bpsi_{\supp}(y) \sum_{j \ne i} \sum_{k=r+1}^\infty \beta_{ijk}^* \psi_k(y_i, y_j)) \|_{\infty} + \\
	\| (\sum_{y \in D} \bpsi_{\supp}(y) \sum_{j \ne i} \sum_{k=r+1}^\infty \beta_{ijk}^* \gamma_k(y_i, y_j))  \|_{\infty}
	\end{split}
	\end{align}
	We will bound both the terms separately. 
	\paragraph{Bound on $ \frac{1}{N}\| (\sum_{y \in D} \bpsi_{\supp}(y) \sum_{j \ne i} \sum_{k=r+1}^\infty \beta_{ijk}^* \psi_k(y_i, y_j)) \|_{\infty} $.}
		Let $R \triangleq (\sum_{y \in D} \bpsi_{\supp}(y) \sum_{j \ne i} \sum_{k=r+1}^\infty \beta_{ijk}^* \psi_k(y_i, y_j)) $ be a random variable. Let $l$th entry of $R$ be $R_l$, i.e., $R_l =  (\sum_{y \in D} \bpsi_{\supp}(y)_l \sum_{j \ne i} \sum_{k=r+1}^\infty \beta_{ijk}^* \psi_k(y_i, y_j)) $.
		
		\begin{align}
		\begin{split}
		|(\sum_{y \in D} \bpsi_{\supp}(y)_l \sum_{j \ne i} \sum_{k=r+1}^\infty \beta_{ijk}^* \psi_k(y_i, y_j))| &\leq  (\sum_{y \in D} |\bpsi_{\supp}(y)_l| \sum_{j \ne i} \sum_{k=r+1}^\infty |\beta_{ijk}^*| |\psi_k(y_i, y_j)|) \\
		&\leq (\sum_{y \in D} |\overline\bpsi| \sum_{j \ne i} \sum_{k=r+1}^\infty |\beta_{ijk}^*| |\overline\psi |) \\
		&\leq N  |\overline\psi |^2 \sum_{j \ne i} \sum_{k=r+1}^\infty |\beta_{ijk}^*|
		\end{split}
		\end{align}
		
		Let $[\alpha_{ijk}]_{k=1}^\infty$ be a convergent series with positive entries such that $|\bbeta_{ijk}^*| \leq \alpha_{ijk} , \forall k \geq r+1$. To give an example, let $\beta_{ijk}^*$ be the Fourier coefficients, then we can choose $\alpha_{ijk} = \frac{D}{k^2}$ for some $D > 0$. Then it holds that $|\beta_{ijk}| \leq \frac{D}{k^2}, \forall k \geq r+1$ for sufficiently large $r$.
		\begin{lemma}
			\label{lem:convergent_series}
			For sufficiently large $r$, $\sum_{k=r+1}^\infty |\beta_{ijk}^*| \leq \delta $ for any $\delta > 0$.
		\end{lemma}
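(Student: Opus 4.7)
The plan is to derive this directly from the convergent series assumption stated in Section~\ref{subsec:model} together with the comparison sequence $[\alpha_{ijk}]_{k=1}^\infty$ introduced just before the lemma. Recall that the model assumption guarantees that for sufficiently large $r$ the tail magnitudes $\{|\beta_{ijk}^*|\}_{k=r+1}^\infty$ form a convergent series, and we have chosen positive $\alpha_{ijk}$ with $|\beta_{ijk}^*| \leq \alpha_{ijk}$ for every $k \geq r+1$ such that $\sum_k \alpha_{ijk}$ itself converges (the Fourier case $\alpha_{ijk} = D/k^2$ being the running example).

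First, I would invoke the comparison test: since $0 \leq |\beta_{ijk}^*| \leq \alpha_{ijk}$ for all $k \geq r+1$, summation preserves the inequality and gives
\[
\sum_{k=r+1}^\infty |\beta_{ijk}^*| \;\leq\; \sum_{k=r+1}^\infty \alpha_{ijk}.
\]
Second, I would observe that convergence of $\sum_{k=1}^\infty \alpha_{ijk}$ is equivalent to the Cauchy property of its partial sums, which means the tail $\sum_{k=r+1}^\infty \alpha_{ijk} \to 0$ as $r \to \infty$. Consequently, given any $\delta > 0$, one can choose $r$ large enough that $\sum_{k=r+1}^\infty \alpha_{ijk} \leq \delta$, and the desired bound follows.

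To make the statement concrete and to tie it back to the Fourier example already invoked in the paper, I would briefly verify the claim with $\alpha_{ijk} = D/k^2$: the integral comparison gives
\[
\sum_{k=r+1}^\infty \frac{D}{k^2} \;\leq\; D \int_{r}^{\infty} \frac{dx}{x^2} \;=\; \frac{D}{r},
\]
so taking $r \geq D/\delta$ suffices. This makes the dependence $\delta = \delta(r)$ used throughout the subsequent proofs explicit.

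There is no real obstacle here; the lemma is essentially a restatement of the definition of a convergent series combined with the comparison test. The only subtlety worth being explicit about is that the bound must hold uniformly over the indices $(i,j)$ that are actually relevant, namely $j \in \supp$ (at most $d$ neighbors); since this is a finite index set, choosing $r$ to work for the worst pair still yields a finite $r$, so the uniform version of the statement causes no difficulty.
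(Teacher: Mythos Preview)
Your proposal is correct and follows essentially the same approach as the paper: the paper's proof also bounds $\sum_{k=r+1}^\infty |\beta_{ijk}^*|$ by the tail $\sum_{k=r+1}^\infty \alpha_{ijk}$ via comparison and then invokes the fact that tails of a convergent series vanish. Your additions (the explicit integral estimate for the Fourier example and the remark on uniformity over the finitely many pairs $(i,j)$ with $j \in \supp$) are not in the paper but are harmless elaborations of the same idea.
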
 
		\begin{proof}
			The tail sum of a convergent series goes to $0$, i.e., for sufficiently large $r$,
			\begin{align}
			\begin{split}
			\sum_{k=r+1}^\infty \alpha_{ijk}^* \leq \delta
			\end{split}
			\end{align} 
			It readily follows that $\sum_{k=r+1}^\infty |\beta_{ijk}^*| \leq \delta $.
		\end{proof}
		Using Lemma \ref{lem:convergent_series} and taking $\delta = \frac{\epsilon}{d |\overline{\psi}|^2 }$, it follows that:
		\begin{align}
		\begin{split}
		|(\sum_{y \in D} \bpsi_{\supp}(y)_l \sum_{j \ne i} \sum_{k=r+1}^\infty \beta_{ijk}^* \psi_k(y_i, y_j))| &\leq N \epsilon
		\end{split}
		\end{align}
		This ensures that for large enough $r$, we have $ \\\frac{1}{N} \| (\sum_{y \in D} \bpsi_{\supp}(y) \sum_{j \ne i} \sum_{k=r+1}^\infty \beta_{ijk}^* \psi_k(y_i, y_j)) \|_{\infty} \leq  \epsilon $.

	\paragraph{Bound on $ \frac{1}{N}	\| (\sum_{y \in D} \bpsi_{\supp}(y) \sum_{j \ne i} \sum_{k=r+1}^\infty \beta_{ijk}^* \gamma_k(y_i, y_j))  \|_{\infty} $.}
		Let $R$ be a random variable such that $R \triangleq \bpsi_{\supp}(y) \sum_{j \ne i} \sum_{k=r+1}^\infty \beta_{ijk}^* \gamma_k(y_i, y_j)$. Let the $l$-th entry of $R$ be $R_l = \bpsi_{\supp}(y)_l \sum_{j \ne i} \sum_{k=r+1}^\infty \beta_{ijk}^* \gamma_k(y_i, y_j)$. For given $\bbeta^*$, $R_l$ is a zero mean sub-Gaussian random variable with variance proxy $\bpsi_{\supp}(y)^2 \sum_{j\ne i} \sum_{k=r+1}^{\infty} (\beta_{ijk}^*)^2$. Note that since $\alpha_{ijk}$ is convergent series with positive entries, $\alpha_{ijk}^2$ is also convergent. This means that using the tail sum of a convergent series 
		\begin{align}
		\begin{split}
		\sum_{k=r+1}^{\infty} \alpha_{ijk}^2 \leq \delta  
		\end{split}
		\end{align}
		for sufficiently large $r$ and any $\delta > 0$. It follows that
		\begin{align}
		\begin{split}
		\sum_{k=r+1}^{\infty} (\beta_{ijk}^*)^2 \leq \delta  
		\end{split}
		\end{align}
		Also, note that $R_l | \bbeta^*$ are mutually independent for $y \in D$. Thus
		\begin{align}
		\begin{split}
		\prob(\frac{1}{N} |R_l| \geq \epsilon  | \bbeta^*) \leq 2 \exp(- \frac{N \epsilon^2}{2 \overline{\bpsi}^2 d \delta }  )
		\end{split}
		\end{align} 
		Taking a union bound across $l \in \supp$, we get
		\begin{align}
		\begin{split}
		\prob(\frac{1}{N}	\| (\sum_{y \in D} \bpsi_{\supp}(y) \sum_{j \ne i} \sum_{k=r+1}^\infty \beta_{ijk}^* \gamma_k(y_i, y_j))  \|_{\infty} \geq \epsilon  | \bbeta^*) \leq 2 d \exp(- \frac{N \epsilon^2}{2 \overline{\bpsi}^2 d \delta }  )
		\end{split}
		\end{align} 
		Computing the expectation with respect to $\bbeta^*$, we get
		\begin{align}
		\begin{split}
		\prob(\frac{1}{N}	\| (\sum_{y \in D} \bpsi_{\supp}(y) \sum_{j \ne i} \sum_{k=r+1}^\infty \beta_{ijk}^* \gamma_k(y_i, y_j))  \|_{\infty} \geq \epsilon  ) \leq 2 d \exp(- \frac{N \epsilon^2}{2 \overline{\bpsi}^2 d \delta }  )
		\end{split}
		\end{align} 
		where $\delta$ is an arbitrary constant which depends on $r$.

	Combining above results,
	\begin{align}
	\begin{split}
	\prob(\frac{1}{N}\| (\sum_{y \in D} \bpsi_{\supp}(y) \sum_{j \ne i} \sum_{k=r+1}^\infty \beta_{ijk}^* \widetilde{\psi}(y_i, y_j)) \|_{\infty} \geq 2 \epsilon) \leq 2 d \exp(- \frac{N \epsilon^2}{2 \overline{\bpsi}^2 d \delta }  )
	\end{split}
	\end{align}
	Thus if $N = \mathcal{O}(\frac{ \overline{\bpsi}^2 d \delta \log d}{\epsilon^2})$ is sufficient for $\frac{1}{N}\| (\sum_{y \in D} \bpsi_{\supp}(y) \sum_{j \ne i} \sum_{k=r+1}^\infty \beta_{ijk}^* \widetilde{\psi}(y_i, y_j)) \|_{\infty}$ to concentrate around $2\epsilon$.	
	Using the similar argument as the previous proof, we can write
	\begin{align}
	\begin{split}
	\prob(\frac{1}{N}\| (\sum_{y \in D} \bpsi_{\supp^c}(y) \sum_{j \ne i} \sum_{k=r+1}^\infty \beta_{ijk}^* \widetilde{\psi}(y_i, y_j)) \|_{\infty} \geq 2 \epsilon) \leq 2 (n-d) \exp(- \frac{N \epsilon^2}{2 \overline{\bpsi}^2 d \delta }  )
	\end{split}
	\end{align}
	where $\delta$ could be arbitrarily small depending on $r$. Thus if $N = \mathcal{O}(\frac{ \overline{\bpsi}^2 d \delta \log (n-d)}{\epsilon^2})$ is sufficient for $\frac{1}{N}\| (\sum_{y \in D} \bpsi_{\supp^c}(y) \sum_{j \ne i} \sum_{k=r+1}^\infty \beta_{ijk}^* \widetilde{\psi}(y_i, y_j)) \|_{\infty}$ to concentrate around $2\epsilon$.
\end{proof}

\section{Proof of Lemma \ref{lem:l2 norm bounds}}
\label{proof:l2 norm bounds}
\paragraph{Lemma \ref{lem:l2 norm bounds}}
	\emph{For some $\epsilon > 0$,
	\begin{align}
	\begin{split}
	&\prob(\|\frac{1}{N} (\sum_{x \in D} \bpsi_{\supp}(x) \bgamma_{\supp}^\T(x) \bbeta_{\supp}) \|_2 \geq \epsilon) \leq 2 d \exp( \frac{- N \epsilon^2}{2 \sigma^2 d\widebar\psi^2C^2})
	\end{split}
	\end{align}
	and
	\begin{align}
	\begin{split}
	&\prob(\|\frac{1}{N} (\sum_{x \in D} \bpsi_{\supp}(x) \sum_{j \ne i} \sum_{k=r+1}^\infty \beta_{ijk}^* \widetilde{\psi}_k(x_i, x_j)) \|_2 \geq \epsilon) \leq 2 d \exp( \frac{- N \epsilon^2}{2 d\widebar\psi^2 d \delta})
	\end{split}
	\end{align}
	where $\delta > 0$ could be an arbitrary small constant which depends on $r$.}
\begin{proof}
	Following the notations used in Lemma \ref{lem:psi gamma beta}, we know that
	\begin{align}
	\begin{split}
		\prob(| \frac{1}{N} \sum_{x \in D} R_j(x) | \geq \epsilon \mid \bbeta) \leq 2 \exp(\frac{- N \epsilon^2}{2 \sigma^2 \widebar\psi^2 C^2})
	\end{split}
	\end{align}
	We take $\epsilon = \frac{\epsilon}{\sqrt{d}}$ and apply a union bound over $j \in \supp$, we get
	\begin{align}
	\begin{split}
		\prob(\|\frac{1}{N} (\sum_{x \in D} \bpsi_{\supp}(x) \bgamma_{\supp}^\T(x) \bbeta_{\supp}) \|_2 \geq \epsilon \mid \bbeta) \leq 2 d \exp(\frac{- N \epsilon^2}{2 \sigma^2 d\widebar\psi^2 C^2})
	\end{split}
	\end{align} 
	Computing the expectation with respect to $\bbeta$,
	\begin{align}
		\begin{split}
		\prob(\|\frac{1}{N} (\sum_{x \in D} \bpsi_{\supp}(x) \bgamma_{\supp}^\T(x) \bbeta_{\supp}) \|_2 \geq \epsilon) \leq 2 d \exp(\frac{- N \epsilon^2}{2 \sigma^2 d\widebar\psi^2 C^2})
		\end{split}
	\end{align}
	Similarly, using notations used in Lemma \ref{lem:psi beta tildepsi}, we know that
	\begin{align}
	\begin{split}
		\prob(\frac{1}{N}|R_l| \geq \epsilon \mid \bbeta^* ) \leq 2 \exp(\frac{ - N \epsilon^2 }{ 2 \widebar\psi^2 d \delta})
	\end{split}
	\end{align}
	where $\delta > 0$ is an arbitrary parameter which depends on $r$. We take $\epsilon = \frac{\epsilon}{\sqrt{d}}$ and union bound over $l \in \supp$ and get
	\begin{align}
	\begin{split}
		\prob(\|\frac{1}{N} (\sum_{x \in D} \bpsi_{\supp}(x) \sum_{j \ne i} \sum_{k=r+1}^\infty \beta_{ijk}^* \widetilde{\psi}_k(x_i, x_j)) \|_2 \geq \epsilon | \bbeta^*) \leq 2 d \exp( \frac{- N \epsilon^2}{2 d\widebar\psi^2 d \delta})
	\end{split}
	\end{align}
	Taking expectation with respect to $\bbeta^*$, we get
	\begin{align}
	\begin{split}
		\prob(\|\frac{1}{N} (\sum_{x \in D} \bpsi_{\supp}(x) \sum_{j \ne i} \sum_{k=r+1}^\infty \beta_{ijk}^* \widetilde{\psi}_k(x_i, x_j)) \|_2 \geq \epsilon ) \leq 2 d \exp( \frac{- N \epsilon^2}{2 d\widebar\psi^2 d \delta})
	\end{split}
	\end{align}
\end{proof}

\section{Application - Identifying Countries Influencing Potato Trade in Europe}
\label{sec:real world data}

\begin{figure*}[!ht]
	\begin{center}
		\includegraphics[scale=0.5]{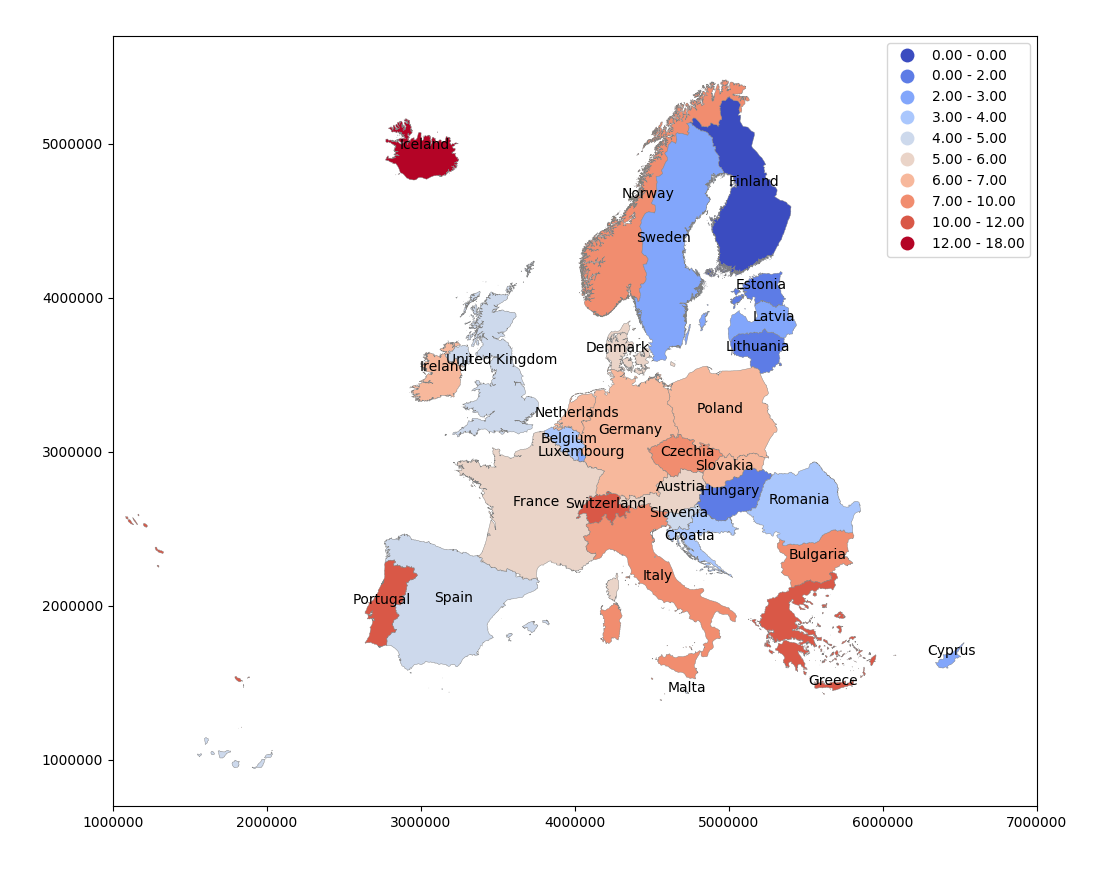}
		\caption{Influence of various European countries on potato trade between $1973-2019$. The influence of a country is measured by the number of its out-neighbors in the learnt game.}
		\label{fig:trading}
	\end{center}
\end{figure*}

In this section, we performed a computational experiment on real world data to demonstrate the effectiveness of our method. We used our method to identify a set of European countries which have influenced potato trade within Europe between $1973-2019$. The trading interactions among various countries can be modeled as a non-parametric strategic game among self-interested countries.

Our experiments were conducted on the publicly available potato trade data from \url{https://ec.europa.eu/eurostat/web/agriculture/data/database}. We extracted the potato trade data among $n = 31$ European countries between $1973$ to $2019$. Each training sample corresponds to a year, and thus, we had $T = 47$ samples for our experiments. The dataset contains the country-wise production volume of potato for $47$ years which we treated as the action $x$ of players. The dataset also contains country-wise trade prices of potato for $47$ years which we treated as payoff $\widetilde{u}_i(x)$ for each country. We chose $r = 16$ basis functions for the Fourier series expansion of each pairwise utility function to run our experiments. In particular, for each pairwise utility function $\widebar{u}_{ij}(x_i, x_j), \forall i,j \in \seq{n}, i\ne j$,  we chose the basis functions from the following set: 
\begin{align}
\begin{split}
	\psi_k(x_i, x_j) \in \bigcup_{l, m \in \{1, 2\}} &\{ \cos(2\pi l x_i) \cos(2 \pi m x_j), \cos(2\pi l x_i) \sin(2 \pi m x_j), \sin(2\pi l x_i) \cos(2 \pi m x_j),\\
	& \sin(2\pi l x_i) \sin(2 \pi m x_j) \}
\end{split}
\end{align}
We computed the influence of the countries by first learning the global graphical game and then computing the number of out-neighbors for each country in the learnt game. The results are shown in Figure \ref{fig:trading}. We observe that many of the central and southern European countries along with Iceland and Portugal have had high influence in the European potato market .


\begin{thebibliography}{10}
	
	\bibitem{Aumann74}
	R.~Aumann.
	\newblock Subjectivity and correlation in randomized strategies.
	\newblock {\em Journal of Mathematical Economics}, 1:67--96, 1974.
	
	\bibitem{Benzwi11}
	O.~{Ben-Zwi} and A.~Ronen.
	\newblock Local and global price of anarchy of graphical games.
	\newblock {\em Theoretical Computer Science}, 412:1196--1207, 2011.
	
	\bibitem{Blum06}
	B.~Blum, C.R. Shelton, and D.~Koller.
	\newblock A continuation method for \uppercase{N}ash equilibria in structured
	games.
	\newblock {\em Journal of Artificial Intelligence Research}, 25:457--502, 2006.
	
	\bibitem{daneshmand2014estimating}
	Hadi Daneshmand, Manuel Gomez-Rodriguez, Le~Song, and Bernhard Schoelkopf.
	\newblock {Estimating Diffusion Network Structures: Recovery Conditions, Sample
		Complexity \& Soft-Thresholding Algorithm}.
	\newblock In {\em International Conference on Machine Learning}, pages
	793--801, 2014.
	
	\bibitem{Daskalakis06}
	C.~Daskalakis, A.~Fabrikant, and C.~Papadimitriou.
	\newblock The game world is flat: The complexity of \uppercase{N}ash equilibria
	in succinct games.
	\newblock {\em International Colloquium on Automata, Languages, and
		Programming}, 4051:513--524, 2006.
	
	\bibitem{Daskalakis09}
	C.~Daskalakis, P.~Goldberg, and C.~Papadimitriou.
	\newblock The complexity of computing a \uppercase{N}ash equilibrium.
	\newblock {\em Communications of the ACM}, 52(2):89--97, 2009.
	
	\bibitem{Daskalakis09b}
	C.~Daskalakis, G.~Schoenebeckt, G.~Valiant, and P.~Valiant.
	\newblock On the complexity of \uppercase{N}ash equilibria of action-graph
	games.
	\newblock {\em ACM-SIAM Symposium on Discrete Algorithms}, pages 710--719,
	2009.
	
	\bibitem{facchinei2007generalized}
	Francisco Facchinei and Christian Kanzow.
	\newblock Generalized nash equilibrium problems.
	\newblock {\em 4or}, 5(3):173--210, 2007.
	
	\bibitem{Garg16}
	V.~Garg and T.~Jaakkola.
	\newblock Learning tree structured potential games.
	\newblock {\em Neural Information Processing Systems}, 29:1552--1560, 2016.
	
	\bibitem{Ghoshal16}
	A.~Ghoshal and J.~Honorio.
	\newblock From behavior to sparse graphical games: Efficient recovery of
	equilibria.
	\newblock {\em IEEE Allerton Conference on Communication, Control, and
		Computing}, 54, 2016.
	
	\bibitem{Ghoshal17}
	A.~Ghoshal and J.~Honorio.
	\newblock Learning graphical games from behavioral data: Sufficient and
	necessary conditions.
	\newblock {\em International Conference on Artificial Intelligence and
		Statistics}, 54:1532--1540, 2017.
	
	\bibitem{Ghoshal18}
	A.~Ghoshal and J.~Honorio.
	\newblock Learning sparse polymatrix games in polynomial time and sample
	complexity.
	\newblock {\em International Conference on Artificial Intelligence and
		Statistics}, pages 1486--1494, 2018.
	
	\bibitem{Honorio15}
	J.~Honorio and L.~Ortiz.
	\newblock Learning the structure and parameters of large-population graphical
	games from behavioral data.
	\newblock {\em Journal of Machine Learning Research}, 16(Jun):1157--1210, 2015.
	
	\bibitem{Irfan14}
	M.~Irfan and L.~Ortiz.
	\newblock On influence, stable behavior, and the most influential individuals
	in networks: A game-theoretic approach.
	\newblock {\em Artificial Intelligence}, 215:79--119, 2014.
	
	\bibitem{Kakade03}
	S.~Kakade, M.~Kearns, J.~Langford, and L.~Ortiz.
	\newblock Correlated equilibria in graphical games.
	\newblock {\em ACM Conference on Economics and Computation}, pages 42--47,
	2003.
	
	\bibitem{Kearns01}
	M.~Kearns, M.~Littman, and S.~Singh.
	\newblock Graphical models for game theory.
	\newblock {\em Uncertainty in Artificial Intelligence}, pages 253--260, 2001.
	
	\bibitem{mertikopoulos2019learning}
	Panayotis Mertikopoulos and Zhengyuan Zhou.
	\newblock Learning in games with continuous action sets and unknown payoff
	functions.
	\newblock {\em Mathematical Programming}, 173(1-2):465--507, 2019.
	
	\bibitem{Nash51}
	J.~Nash.
	\newblock Non-cooperative games.
	\newblock {\em Annals of Mathematics}, 54(2):286--295, 1951.
	
	\bibitem{Ortiz02}
	L.~Ortiz and M.~Kearns.
	\newblock Nash propagation for loopy graphical games.
	\newblock {\em Neural Information Processing Systems}, 15:817--824, 2002.
	
	\bibitem{Papadimitriou08}
	C.~Papadimitriou and T.~Roughgarden.
	\newblock Computing correlated equilibria in multi-player games.
	\newblock {\em Journal of the ACM}, 55(3):1--29, 2008.
	
	\bibitem{perkins2013asynchronous}
	Steven Perkins and David~S Leslie.
	\newblock Asynchronous stochastic approximation with differential inclusions.
	\newblock {\em Stochastic Systems}, 2(2):409--446, 2013.
	
	\bibitem{perkins2015mixed}
	Steven Perkins, Panayotis Mertikopoulos, and David~S Leslie.
	\newblock Mixed-strategy learning with continuous action sets.
	\newblock {\em IEEE Transactions on Automatic Control}, 62(1):379--384, 2015.
	
	\bibitem{ravikumar2007spam}
	Pradeep Ravikumar, Han Liu, John Lafferty, and Larry Wasserman.
	\newblock {Spam: Sparse Additive Models}.
	\newblock In {\em Proceedings of the 20th International Conference on Neural
		Information Processing Systems}, pages 1201--1208. Curran Associates Inc.,
	2007.
	
	\bibitem{ravikumar2010high}
	Pradeep Ravikumar, Martin~J Wainwright, John~D Lafferty, et~al.
	\newblock {High-dimensional Ising Model Selection Using L1-Regularized Logistic
		Regression}.
	\newblock {\em The Annals of Statistics}, 38(3):1287--1319, 2010.
	
	\bibitem{rust2013convergence}
	Bryan Rust.
	\newblock {Convergence of Fourier Series}.
	\newblock 2013.
	
	\bibitem{scutari2010convex}
	Gesualdo Scutari, Daniel~P Palomar, Francisco Facchinei, and Jong-shi Pang.
	\newblock Convex optimization, game theory, and variational inequality theory.
	\newblock {\em IEEE Signal Processing Magazine}, 27(3):35--49, 2010.
	
	\bibitem{Vickrey02}
	D.~Vickrey and D.~Koller.
	\newblock Multi-agent algorithms for solving graphical games.
	\newblock {\em Association for the Advancement of Artificial Intelligence
		Conference}, pages 345--351, 2002.
	
	\bibitem{wainwright2009sharp}
	Martin~J Wainwright.
	\newblock {Sharp Thresholds for High-Dimensional and Noisy Sparsity Recovery
		Using L1-Constrained Quadratic Programming (Lasso)}.
	\newblock {\em IEEE transactions on information theory}, 55(5):2183--2202,
	2009.
	
\end{thebibliography}
\end{document}